\definecolor{ForestGreen}{rgb}{0.1333,0.5451,0.1333}
\definecolor{DarkRed}{rgb}{0.8,0,0}
\definecolor{Red}{rgb}{1,0,0}
\newtheorem{theorem}{Theorem}[section]
\newtheorem{corollary}[theorem]{Corollary}
\newtheorem{lemma}[theorem]{Lemma}
\newtheorem{claim}[theorem]{Claim}
\newtheorem{invariant}[theorem]{Invariant}
\newtheorem{definition}[theorem]{Definition}
\newtheorem{remark}[theorem]{Remark}
\newtheorem*{theorem*}{Theorem}
\newtheorem*{corollary*}{Corollary}
\newtheorem*{conjecture*}{Conjecture}
\newtheorem*{lemma*}{Lemma}
\newtheorem*{thm*}{Theorem}
\newtheorem*{prop*}{Proposition}
\newtheorem*{obs*}{Observation}
\newtheorem*{definition*}{Definition}
\newtheorem*{remark*}{Remark}
\newtheorem*{rec*}{Recommendation}
\newenvironment{fminipage}%
  {\begin{Sbox}\begin{minipage}}%
  {\end{minipage}\end{Sbox}\fbox{\TheSbox}}
\def\defeq{\stackrel{\mathrm{def}}{=}}
\def\ceil#1{\left\lceil #1 \right\rceil}
\renewcommand{\deg}{\operatorname{deg}}
\renewcommand{\hat}{\widehat}
\renewcommand{\tilde}{\widetilde}
\DeclareFontFamily{U}{mathb}{\hyphenchar\font45}
\DeclareFontShape{U}{mathb}{m}{n}{<5> <6> <7> <8> <9> <10> gen * mathb
<10.95> mathb10 <12> <14.4> <17.28> <20.74> <24.88> mathb12}{}
\DeclareSymbolFont{mathb}{U}{mathb}{m}{n}
\DeclareMathSymbol{\rcirclearrow}{\mathbin}{mathb}{'367}
\newif\ifrandom
\newcommand{\todolater}[1]{}
\newcommand{\simon}[1]{{\textbf{ \color{orange} Simon: #1}}}
\newcommand{\mprobst}[1]{{ \textbf{\color{purple} Max: #1}}}
\renewcommand{\simon}[1]{}
\renewcommand{\mprobst}[1]{}
\DeclareMathOperator{\vol}{\mathbf{vol}}
\newcommand*\samethanks[1][\value{footnote}]{\footnotemark[#1]}
\title{Dynamic Connectivity with Expected \\Polylogarithmic Worst-Case  Update Time}
\author{Simon Meierhans\thanks{The research leading to these results has received funding from grant no. 200021 204787 of the Swiss National Science Foundation. Simon Meierhans is supported by a Google PhD Fellowship.} \\
ETH Zurich \\
mesimon@inf.ethz.ch 
\and  
Maximilian Probst Gutenberg\samethanks[1] \\
ETH Zurich \\
maximilian.probst@inf.ethz.ch
}
\date{}
\begin{document}

\maketitle

\begin{abstract}

Whether a graph $G=(V,E)$ is connected is arguably its most fundamental property. Naturally, connectivity was the first characteristic studied for dynamic graphs, i.e. graphs that undergo edge insertions and deletions. While connectivity algorithms with polylogarithmic amortized update time have been known since the 90s, achieving worst-case guarantees has proven more elusive.

Two recent breakthroughs have made important progress on this question: (1) Kapron, King and Mountjoy [SODA'13; Best Paper] gave a Monte-Carlo algorithm with polylogarithmic worst-case update time, and (2) Nanongkai, Saranurak and Wulff-Nilsen [STOC'17, FOCS'17] obtained a Las-Vegas data structure, however, with subpolynomial worst-case update time. Their algorithm was subsequently de-randomized [FOCS'20].

In this article, we present a new dynamic connectivity algorithm based on the popular core graph framework that maintains a hierarchy interleaving vertex and edge sparsification. Previous dynamic implementations of the core graph framework required subpolynomial update time. In contrast, we show how to implement it for dynamic connectivity with \textbf{polylogarithmic expected worst-case update time}.

We further show that the algorithm can be de-randomized efficiently: a deterministic static algorithm for computing a connectivity edge-sparsifier of low congestion in time $T(m) \cdot m$ on an $m$-edge graph yields a deterministic dynamic connectivity algorithm with $\tilde{O}(T(m))$ worst-case update time. Via current state-of-the-art algorithms [STOC'24], we obtain $T(m) = m^{o(1)}$ and recover deterministic subpolynomial worst-case update time.
\end{abstract}

\section{Introduction}

The fully-dynamic connectivity problem is one of the most central problems on dynamic graphs. In its most basic form, it asks to maintain whether a dynamic graph is connected. This has been the first problem on dynamic graphs that has been studied, and continues to be researched extensively \cite{spira1975finding, chin1978algorithms, harel1982line, frederickson1985data, HenzingerKing95, holm01, thorup2000near, patrascu2007planning, kapron, WulffNilsen13,wang2015improvedrandomizeddatastructure, gibb2015dynamicgraphconnectivityimproved, NS17, W17, NSW17, chuzhoy2020deterministic, expander_hierarchy, theoretics:9645}.

Many techniques that later found broad application in (dynamic) graph algorithms were developed in the context of connectivity. These include reductions from general to sparse graphs \cite{frederickson1985data, it97}, topology-trees and top trees \cite{fred97, alstrup}, reductions from fully-dynamic to decremental graphs \cite{HenzingerKing95, holm01}, the emergency planning setting \cite{patrascu2007planning}, and most recently the development of expander pruning techniques \cite{W17, NS17, NSW17, SW19}. 

Although deterministic algorithms with amortized polylogarithmic update time were developed in the 90s \cite{HenzingerKing95, holm01}, turning amortized into worst-case bounds has proven much more challenging. To date, finding a deterministic algorithm with polylogarithmic worst-case update time remains a major open problem. 

\vspace{-10pt}\paragraph{The Kapron-King-Mountjoy algorithm.} In 2013, Kapron, King, and Mountjoy \cite{kapron} presented a breakthrough algorithm that maintains a spanning tree of a dynamic graph with worst-case polylogarithmic update time—an achievement that was awarded the SODA'13 best paper award. Their algorithm maintains a spanning tree, and whenever a tree edge gets deleted, it re-connects the tree with another edge crossing this cut as long as such an edge exists. If there is exactly one edge left in the cut after the deletion, the authors show that this edge can be found via a simple bit trick. Then, they show that adequately sub-sampling the graph reduces to this case. Although this step is algorithmically simple, the analysis is highly non-trivial.

This elegant algorithm has two major drawbacks: 
\begin{itemize}
    \item \textbf{Monte-Carlo property:} Sub-sampling only works with high probability, and if the algorithm fails to recover an edge, it is impossible to know if we got very unlucky or if the graph is truly disconnected. Thus, the algorithm is only correct with high probability.
    \item \textbf{Oblivious adversary assumption:} the algorithm only succeeds if updates are independent of the internal state of the algorithm. This means, in particular, that an adversary with access to the underlying spanning tree can foil the algorithm. 
\end{itemize}

\vspace{-15pt}\paragraph{The Nanongkai-Saranurak-Wulff-Nilsen Algorithm.} In 2017, Nanongkai, Saranurak and Wulff-Nilsen \cite{NSW17}, building on \cite{W17, NS17}, presented a Las-Vegas algorithm for maintaining a minimum spanning tree that was subsequently de-randomized in \cite{chuzhoy2020deterministic}. Their algorithm is based on expander graph contractions, and combines them with ideas that date back to \cite{frederickson1985data}. To dynamically maintain expander graphs, they developed expander pruning. The ensuing expander framework was highly influential in both static and dynamic algorithms, and expander pruning has among others led to breakthrough algorithms for computing expander decompositions \cite{SW19}, and is heavily used in the currently fastest and arguably simplest algorithm for solving decremental and static minimum-cost flow \cite{decrflow}. 

While this algorithm addresses the drawbacks of the \cite{kapron}-algorithm, it again has two major drawbacks:
\begin{itemize}
    \item \textbf{Subpolynomial Overhead:} The framework from \cite{NSW17} seems to inherently incur subpolynomial worst-case update times.  Even with optimal static subroutines, the algorithm cannot achieve polylogarithmic worst-case bounds.
    \item \textbf{Algorithmic Complexity:} The framework is technically intricate, and its de-randomization via \cite{chuzhoy2020deterministic} only further exacerbates this fact.
\end{itemize}
While the algorithmic complexity may not be inherent -- recent work \cite{expander_hierarchy} achieves similar results with greater simplicity and smaller subpolynomial factors -- the subpolynomial overhead appears fundamental to their contraction-based approach.

\vspace{-10pt}\paragraph{Our Contribution.} We present the first Las-Vegas algorithm for the dynamic connectivity problem with polylogarithmic worst-case time. 

\begin{theorem} \label{thm:simple}
    There is an algorithm that dynamically maintains a maximal spanning forest of a dynamic graph $G$ with polylogarithmic expected worst-case update time. The algorithmic guarantees hold against an adaptive adversary\footnote{In fact, our algorithm is non-oblivious, i.e., it works even against an adversary that has access to the entire state of the algorithm, including its random bits.} with access to the maximal spanning forest.\footnote{Our algorithm explicitly outputs changes to the spanning forest and ensures that it undergoes at most $2$ edge updates per update to $G$.}
\end{theorem}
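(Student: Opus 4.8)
The plan is to give a \emph{dynamic} implementation of the core graph framework, reducing the problem on an $m$-edge graph to a logarithmic tower of sparser instances and de-amortizing the maintenance. We maintain a hierarchy $G = G_0, G_1, \dots, G_L$ with $L = O(\log n)$ in which odd levels perform \textbf{edge sparsification} and even levels perform \textbf{vertex sparsification}. For an edge step we let $G_{i+1} \subseteq G_i$ be a \emph{connectivity edge-sparsifier of low congestion}: $G_{i+1}$ has the same connected components as $G_i$, has $\Otil(|V(G_i)|)$ edges, and comes with an embedding $\bPi_i$ of $E(G_i)$ into $G_{i+1}$ in which every edge of $G_i$ maps to a $G_{i+1}$-path of length $\polylog(n)$ and every edge of $G_{i+1}$ carries at most $\polylog(n)$ such paths; this is produced by a static subroutine (a randomized one running in $\Otil(m)$ expected time, or a deterministic one in $T(m)\cdot m$ time), which we may assume is Las~Vegas because its output — same components, and the length/congestion of $\bPi_i$ — is verifiable in $\Otil(m)$ deterministic time. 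For a vertex step we compute a low-diameter clustering of $G_i$ (a bounded-depth spanning forest of high-conductance pieces, via \lowDiamTree), contract the clusters to obtain $G_{i+1}$ with $|V(G_{i+1})| \le |V(G_i)|/2$, and store each cluster's bounded-depth spanning tree. Since the vertex steps halve $|V|$ and the edge steps keep $|E|$ near-linear in $|V|$, $G_L$ has $O(1)$ size and its maximal spanning forest is computed by brute force. A maximal spanning forest then lifts downward one level at a time: across an edge step for free, since a spanning forest of $G_{i+1}$ lies in $E(G_{i+1})\subseteq E(G_i)$ and spans all components of $G_i$; across a vertex step by taking the (original) forest edges of $G_{i+1}$ together with the stored cluster trees, which changes the forest by only $\polylog(n)$ edges per change one level up.

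The dynamic part rests on a \textbf{stability} lemma: one edge update to $G_i$ induces only $\polylog(n)$ changes to $(G_{i+1},\bPi_i)$ (or to the contracted graph and stored trees). For an edge step, deleting $e$ is trivial if $e\notin E(G_{i+1})$; if $e\in E(G_{i+1})$ we remove it, re-embed the $\le \polylog(n)$ edges of $G_i$ whose $\bPi_i$-path used $e$, and — if this disconnects $G_{i+1}$ while $G_i$ stays connected across the new cut — reconnect $G_{i+1}$ using one of those rerouted edges, a local repair in the spirit of expander pruning; an insertion either joins two components of $G_{i+1}$ or is absorbed by embedding $e$ along a short path guaranteed by the sparsifier, again with $\polylog(n)$ recourse. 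For a vertex step, a deletion damages $O(1)$ clusters and is fixed by pruning a small piece (which becomes its own cluster), once more $\polylog(n)$ recourse upstairs. Composing over the $L=O(\log n)$ levels, an update to $G$ costs $\polylog(n)$ work plus the amortized cost of periodic rebuilds. For the latter, we maintain level $i$ in epochs of length $\Theta(m_i)$, rebuilding it at epoch end by calling the static subroutine on the current $G_i$ in $\Otil(m_i)$ expected time; a level-$i$ rebuild also restarts all higher levels, whose sizes decrease geometrically, so the total is still $\Otil(m_i)$, and spread over $\Theta(m_i)$ updates this is $\polylog(n)$ expected per update. Running a constant number of phase-offset copies of each level (managed by \DynamicCore/\Rebuild/\Initialize) and interleaving the rebuild work keeps a fully-built structure live at all times, turning this into a $\polylog(n)$ \emph{expected worst-case} bound; finally, we expose the forest through a small buffer so each update to $G$ triggers at most the $2$ canonical changes (remove a forest edge, add its replacement if one exists) rather than the internal churn of the hierarchy.

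Correctness is \emph{deterministic}: at every moment the maintained hierarchy is a genuine chain of exact connectivity certificates (the Las~Vegas static subroutine always outputs a valid sparsifier), so the reported forest is always a maximal spanning forest of the current $G$, with no error probability. This also disposes of the adaptive adversary, for correctness and for running time alike: an adaptive adversary, seeing the forest, can at worst feed adversarial inputs into future calls of the static subroutine, but its $\Otil(m)$ expected running time holds for \emph{every} input, and the randomness of each future epoch is drawn fresh and is independent of all past outputs (the only randomness ever reflected in an output is that of an already-completed rebuild), so conditioning on the entire past still leaves $\polylog(n)$ expected work for the next update.

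I expect the main obstacle to be the stability lemma for the edge step — in particular, maintaining the embedding $\bPi_i$ under deletions with only $\polylog(n)$ recourse and showing the reconnection step stays local. This is precisely where previous dynamic core-graph implementations paid an $n^{o(1)}$ overhead, and driving it down to $\polylog(n)$ forces the static sparsifier to be simultaneously low-congestion, short-path, and amenable to local pruning-style repair, while the epoch lengths at different levels must be nested so that a rebuild at one level never cascades beyond a geometric tail. By comparison, the de-amortization scaffolding and the adversary argument are, once set up as above, essentially bookkeeping.
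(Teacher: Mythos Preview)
Your plan is essentially the embedding-based reduction that the paper states as Theorem~\ref{thm:reduction}, not the algorithm that actually proves Theorem~\ref{thm:simple}. The static subroutine you assume---a randomized $\tilde O(m)$-time algorithm producing a connectivity sparsifier together with an embedding of polylogarithmic congestion and dilation---is precisely the primitive the paper says is \emph{not} currently known, even with randomness (see \Cref{sec:derandomziation}: ``there is not even a randomized algorithm achieving polylogarithmic $T(m)$''). The paper's actual edge-sparsification step does not maintain an embedding at all. It takes a $\phi$-expander decomposition of $\hat G_i$ with $\phi=\tilde\Omega(1)$, lets $\hat H_i$ be the inter-cluster edges together with a spanning tree $\hat T_j$ of each expander piece, and on a deletion invokes the recent \emph{worst-case} expander pruning of~\cite{meierhans2025expanderpruningpolylogarithmicworstcase} to peel off a set of $\tilde O(1)$ volume while keeping the rest a $\tilde\Omega(1)$-expander; the spanning tree $\hat T_j$ is then repaired by uniformly sampling incident edges of the smaller component, which succeeds with probability $\tilde\Omega(1)$ precisely because the pruned piece is still an expander. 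That sampling step is where the randomness lives, and it is what lets the paper sidestep the embedding primitive you are assuming.

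There is also a gap in your recourse accounting. You claim that one update to $G_i$ induces $\polylog(n)$ changes to $G_{i+1}$ and then ``compose over the $L=O(\log n)$ levels'' to get $\polylog(n)$ total; but naively composing gives $(\polylog n)^{O(\log n)}$, which is only quasi-polynomial. The paper is explicit that this is the crux: with $\Lambda=\Theta(\log m)$ levels one needs the deletion recourse per level to be \emph{at most one}. Their mechanism for this is an asymmetry you do not exploit---each update to $G$ causes at most one edge \emph{deletion} to propagate to each $H_i$, while all the $\tilde O(1)$ compensating edges are \emph{insertions}, and insertions never threaten any connectivity certificate so they can be forwarded upward without triggering further deletions (Claim~\ref{clm:size} shows the insertion count grows only \emph{linearly} in $i$, not multiplicatively). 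Your ``re-embed the $\le\polylog(n)$ broken edges'' step does not have this structure: either you must find new short paths in $G_{i+1}$ in $\polylog$ time (which you do not supply), or you add those edges to $G_{i+1}$, in which case you need to argue that these insertions do not themselves spawn $\polylog$ further changes at the next level.
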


Our algorithm is both conceptually simple and natural for dynamic connectivity.\footnote{We remark that any algorithm maintaining a spanning forest can be used to maintain a $(1 + \epsilon)$ approximate minimum spanning tree of a weighted graph $G = (V, E, w)$ with an additional multiplicative overhead of $\log(U)/\epsilon$ in the update time where $w(e) \in [1, U]$ (See \Cref{corr:apx_mst}).} The key technical contribution is implementing the highly influential dynamic core graph framework -- which also underlies state-of-the-art algorithms for maintaining distances \cite{maxflow, kyng2023dynamic, kyng2024bootstrapping}, electrical voltages \cite{chen2020fast}, and cuts \cite{decrflow} -- with only polylogarithmic overhead. While connectivity is arguably simpler than these problems, we believe our techniques can inspire progress on other dynamic graph problems.

Although our algorithm remains randomized, it outlines a clear path to the ultimate goal: a deterministic algorithm for fully-dynamic connectivity with polylogarithmic worst-case update time. We show that a fast deterministic algorithm for computing connectivity sparsifiers with low congestion yields such a result.

\begin{definition}[Graph Embedding]\label{def:graphEmbedding}
For graph $G= (V, E)$ and subgraph $H = (V, E') \subseteq G$, we say $\Pi_{G \mapsto H}$ is an \emph{embedding} if each edge $e = (u,v) \in E$ is mapped to a $uv$-path  $\Pi_{G \mapsto H}(e)$ in $H$. We define the \emph{congestion} of an edge $e' \in H$ to be the number of paths $\Pi_{G \mapsto H}(e)$ that contain $e'$,. The \emph{congestion} of $\Pi_{G \mapsto H}$ is defined to be the maximal congestion achieved by any $e' \in H$. 
\end{definition}

\begin{theorem}\label{thm:reduction}
Let $G=(V,E)$ be an $m$-edge graph. Given an algorithm $\mathcal{A}$ that given $G$ computes in deterministic time $T(m) \cdot m$ a subgraph $H \subseteq G$ with $\tilde{O}(n)$ edges along with an embedding $\Pi_{G \mapsto H}$ of congestion $T(m) \cdot \frac{m}{n}$. Then, there is a deterministic algorithm for the dynamic connectivity problem on a graph with $m$ edges with worst-case update time $\tilde{O}(T(m))$.
\end{theorem}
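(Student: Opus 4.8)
The plan is to observe that the randomized algorithm behind \Cref{thm:simple} invokes randomness in exactly one subroutine --- a \emph{static} algorithm that, given a graph, returns a connectivity edge-sparsifier together with a low-congestion embedding of the graph into it --- and that swapping this subroutine for the deterministic algorithm $\mathcal{A}$ makes the whole construction deterministic while inflating the worst-case update time by a single $\tilde{O}(T(m))$ factor. First I would make the interface of this subroutine explicit: inside the core graph hierarchy it is always called on some graph $G'$ with $n'$ vertices and $m'$ edges and must return $H' \subseteq G'$ with $\tilde{O}(n')$ edges together with an embedding $\Pi_{G' \mapsto H'}$, whose congestion is the ``quality parameter'' $q$ that is propagated through the analysis of \Cref{thm:simple}. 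In the randomized instantiation $q = \polylog(m)$ and the running time is $\tilde{O}(m')$, whereas $\mathcal{A}$ yields $q = T(m') \cdot m'/n'$ and running time $T(m') \cdot m'$.

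Second, I would verify that $\mathcal{A}$ is only ever run on graphs with a polylogarithmic edge-to-vertex ratio, so that $q = T(\cdot)\cdot\tilde{O}(1) = \tilde{O}(T(m))$ throughout. This holds because the hierarchy starts from a (deterministically computed) sparse certificate with $m = \tilde{O}(n)$ edges, and every graph appearing at a deeper level --- as well as every bounded-size subgraph handed to \Rebuild{} during a partial rebuild --- arises from a sparse graph by vertex sparsification (contracting a low-stretch forest) followed by a fresh edge sparsification, which keeps the edge count at $\tilde{O}(\text{number of vertices})$. Consequently every embedding in the hierarchy has congestion $\tilde{O}(T(m))$, and a (partial) rebuild touching $s$ edges costs $T(s)\cdot s \le T(m)\cdot s$ instead of $\tilde{O}(s)$.

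Third --- and this is the main step --- I would re-run the worst-case update-time analysis of \Cref{thm:simple} with these two substitutions in place. The quality parameter $q$ enters that analysis in two ways: as a multiplicative factor on the recourse that an update (or the removal of a rebuilt sparsifier edge) generates at the next level of the hierarchy, and as a multiplicative factor governing how often a piece must be rebuilt. Crucially, deleting a sparsifier edge is handled by \emph{rerouting within the current level}, reusing the embedding paths, rather than by cascading deletions further down; hence $q$ contributes only an \emph{additive} $\tilde{O}(T(m))$ overhead per level and does not compound across the $O(\log m / \log\log m)$ levels. Combined with the $T(m)$ blow-up in rebuild cost --- which is amortized against the $\Theta(s/\polylog(m))$ updates separating consecutive rebuilds of an $s$-edge piece exactly as in the polylogarithmic analysis --- this gives a worst-case update time of $\tilde{O}(T(m)) \cdot O(\log m/\log\log m) = \tilde{O}(T(m))$. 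Determinism is then immediate: $\mathcal{A}$ is deterministic by hypothesis, and every other ingredient --- the low-stretch forest used for vertex sparsification, the maintenance of the spanning forest and its change list, the de-amortization schedule, and the brute-force data structure at the lowest level --- is already deterministic.

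The step I expect to be the main obstacle is the claim above that $q$ contributes additively rather than multiplicatively across levels: making it rigorous requires re-inspecting the recourse bookkeeping of \Cref{thm:simple} to confirm that the deletion of a core/sparsifier edge spawns only $O(1)$ structural changes that propagate to deeper levels (with all $q$-dependent work --- rerouting and rebuild scheduling --- staying at the current level), and that $q$ enters the rebuild budget only as a per-level quantity. A secondary, more mechanical point is verifying the uniform $\tilde{O}(1)$ edge-to-vertex ratio of every graph passed to $\mathcal{A}$, including those arising in partial rebuilds, since only then does the congestion $T(m)\cdot m/n$ guaranteed by $\mathcal{A}$ actually evaluate to $\tilde{O}(T(m))$.
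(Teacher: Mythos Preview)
Your high-level strategy matches the paper's: replace the randomized edge-sparsifier component of the core-graph hierarchy by a deterministic sparsifier-with-embedding, observe that every graph fed to it has $m' = \tilde{O}(n')$ so the congestion becomes $\tilde{O}(T(m))$, and check that the congestion parameter does not compound across the $O(\log m)$ levels because a deletion at any level only spawns edge \emph{insertions} (which are sent straight to the top). That is exactly the paper's argument in \Cref{sec:derandomziation}.

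However, several of your supporting claims misdescribe the randomized algorithm and would need to be rewritten. First, the randomized sparsifier of \Cref{thm:simple} does \emph{not} compute an embedding; it uses an expander decomposition, expander pruning, and random sampling to repair spanning trees inside expanders. The de-randomization therefore does not ``swap one static subroutine for another with the same interface'': it replaces the entire expander-based maintenance mechanism by a different one --- when an edge $e$ of $\hat{H}_i$ is deleted, add to $\hat{H}_i$ every edge $e'\in\hat{G}_i$ with $e\in\Pi_{\hat{G}_i\mapsto\hat{H}_i}(e')$. Your ``rerouting within the current level'' gestures at this, but you should state it explicitly since it is the actual reason the recourse is additive per level. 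Second, the vertex-sparsification step uses $\kappa$-shattering forests (\Cref{def:shattering_forest}), not low-stretch forests, and the hierarchy has $\Lambda = O(\log m)$ levels, not $O(\log m/\log\log m)$; your bound $\tilde{O}(T(m))\cdot O(\log m/\log\log m)$ is numerically fine but cites the wrong depth. Finally, there is no separate ``sparse certificate'' preprocessing --- the $\tilde{O}(1)$ edge-to-vertex ratio of each $\hat{G}_i$ comes from the degree-$3$ reduction together with the size bound in \Cref{clm:size}, and this is what you should invoke when bounding the congestion by $\tilde{O}(T(m))$.
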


\Cref{thm:reduction} gives a clear path towards an efficient deterministic algorithm. While currently, there is not even a randomized algorithm achieving polylogarithmic $T(m)$, it is relatively straightforward to achieve subpolynomial $T(m)$ using expander embedding techniques implemented via efficient shortest path data structures (see \cite{maxflow, van2023deterministic, chen2023simple, kyng2023dynamic, kyng2024bootstrapping}). Thus, we retrieve a deterministic dynamic connectivity algorithm with subpolynomial worst-case update time.

We point out that the maximal spanning forest maintained by our algorithm only undergoes at most 2 edge updates per update to $G$ and thus changes can be outputted explicitly. 
Our algorithm also works against a non-oblivious adversary, i.e. an adversary that has access to the entire internal state of the algorithm including its random bits. 
Finally, we remark that by the black box reduction from \cite{bernstein2021deamortization}, the polylogarithmic expected worst-case update time could be transformed into an update time that is polylogarithmic with high probability.

\subsection{Overview} 

\paragraph{The Dynamic Core Graph Framework.} The dynamic core graph framework solves graph problems by maintaining a hierarchy that alternates between two types of sparsification: vertex-sparsifiers (called core graphs) and edge-sparsifiers.

Consider an input graph $G$ with $n$ vertices and $m = \tilde{O}(n)$ edges. Then, one cannot substantially reduce the number of edges in $G$ without compromising important graph properties. Instead, the framework selects a forest $F$ consisting of $n/\kappa$ components for some parameter $\kappa$, then contracts these components in $G$ to reduce the vertex count to $n/\kappa$. The graph after contraction is called a core graph and denoted by $\mathcal{C}(G, F)$. While $\mathcal{C}(G, F)$ does not necessarily reduce the number of edges, edge sparsification can now aim to reduce the edge count to $\tilde{O}(n/\kappa)$.

The central challenge in implementing this framework is finding a forest $F$ such that the core graph $\mathcal{C}(G, F)$ either preserves the desired property from $G$ or allows efficient extraction of this property. When dynamizing this approach, the time to update $\mathcal{C}(G, F)$ after an update to $G$ typically scales polynomially in $\kappa$. It is therefore desirable to choose $\kappa$ small and iterate the vertex and edge sparsification process until the problem size becomes small. This yields a hierarchy with $\Lambda = \Theta(\log_{\kappa}m)$ levels.

The key challenge in maintaining such a hierarchy dynamically is using core graphs and edge sparsifiers that can be maintained with small recourse. That is, for some small parameter $\tau$, each insertion or deletion to graph $G$ should cause at most $\tau$ changes in $\mathcal{C}(G, F)$ and its sparsifier. Since core graphs and their edge sparsifiers operate on top of each other, recourse multiplies across levels. Previous approaches obtained a total recourse bound of at most $\tau^{O(\Lambda)}$ across the hierarchy for each update to $G$.

Previous algorithms choose subpolynomial $\kappa$ (i.e., $\kappa = m^{o(1)}$) and typically achieve $\tau = \tilde{O}(1)$, resulting in subpolynomial overhead. In contrast, we need polylogarithmic $\kappa$ (i.e., $\kappa = \tilde{O}(1)$), which yields subpolynomial (or larger) recourse bounds \textbf{unless the recourse per level is at most one}, i.e., $\tau \leq 1$! This requires highly efficient algorithms combined with extremely careful analysis.

\paragraph{Formal Definition of the Dynamic Core Graph Framework.}
Given a graph $G = (V,E)$, its core graph with respect to a spanning forest $F \subseteq G$ is the graph $\mathcal{C}(G, F)$ obtained from $G$ by contracting the tree components in $F$ into single vertices and removing self-loops. An edge sparsifier $H \subseteq G$ is a graph that preserves some property of $G$ while being substantially sparser. Typically, for $G$ having $n$ vertices, one aims for $H$ to have only $\tilde{O}(n)$ edges.

The core graph framework maintains a hierarchy $\hat{G}_0, \hat{H}_0, \hat{G}_1, \hat{H}_1, \ldots, \hat{G}_{\Lambda}, \hat{H}_{\Lambda}$ (for $\Lambda = \Theta(\log_{\kappa} n)$). The graph $\hat{G}_0$ is the input graph $G$. For $i \geq 0$, $\hat{H}_i$ is an edge sparsifier of $\hat{G}_i$, and for $i > 0$, $\hat{G}_i = \mathcal{C}(\hat{H}_{i-1}, F_i)$ is the core graph of $\hat{H}_{i-1}$ obtained by contracting along the tree components of some forest $F_i$.

This completes the formal description of the core graph framework. Before proceeding, we note an important subtlety: deletion of an edge from a forest $F_i$ splits one of its components into two, resulting in a vertex split of the corresponding vertex in $\hat{G}_i$ (and $\hat{H}_i$). Thus, besides handling edge updates, one must also control recourse for such vertex splits.

\paragraph{Connectivity in the Core Graph Framework.}
While connectivity is arguably the simplest graph property to maintain, we believe this problem provides an excellent testbed for refining current techniques. We first outline a static version of our algorithm:
\begin{itemize}
    \item \textbf{Implementing the core graphs:} To form the core graphs, we find forests $F_i$ in the graphs $\hat{H}_{i-1}$ such that each component consists of at most $\kappa$ vertices.\footnote{This is slightly imprecise, we rather aim for each component to be incident to at most $\kappa$ volume with respect to $\hat{H}_{i-1}$.} In our algorithm, we choose $\kappa = \tilde{O}(1)$ since an edge deletion to $F_i$ might require investigating an entire component of $F_i$ to update $\hat{G}_i$. Since contractions preserve connectivity, the forest $F_i$ serves as a certificate for the connectivity of vertices contracted into super-vertices in $\hat{G}_i$.
    \item \textbf{Implementing the edge sparsification:} We first find a $\phi$-expander decomposition of $\hat{G}_i$, which is a vertex partition $X_1, X_2, \ldots, X_{\eta}$ such that:
      \begin{enumerate}
        \item for each $1 \leq \ell \leq \eta$, $\hat{G}_i[X_\ell]$ is a $\phi$-expander meaning for every cut $(S, X_{\ell} \setminus S)$, $|E(S, X_{\ell} \setminus S)| \geq \phi \cdot \min\{ \vol_{\hat{G}_i}(S), \vol_{\hat{G}_i}(X_{\ell} \setminus S)\}$, and
        \item the number of edges not in any such expander subgraph  $\hat{G}_i[X_\ell]$ is $q \cdot \phi \cdot m$ for some $q = \tilde{O}(1)$.
    \end{enumerate}
    We choose $\phi$ such that $q \cdot \phi \cdot m \leq m/4$, which can be achieved with $\phi = \tilde{\Omega}(1)$. Then, we let $\hat{H}_i$ be the graph obtained from these edges, and add for each expander graph  $\hat{G}_i[X_\ell]$ a spanning tree $T_{\ell}$. This effectively reduces the number of edges in $\hat{H}_i$ by a factor of $2$, i.e., $\hat{H}_{i}$ consists of less than half the number of edges in $\hat{H}_{i-1}$.

    Clearly, the connectivity on expanders is preserved in $\hat{H}_i$ by the spanning trees $T_{\ell}$, and the remaining connectivity information is simply added to $\hat{H}_i$. Thus, $\hat{H}_i$ preserves the connectivity information of $\hat{G}_i$.    
\end{itemize}
We interleave the above vertex sparsification and edge sparsification and, if $G$ is connected, obtain a hierarchy where $\hat{G}_{\Lambda} = \hat{H}_{\Lambda}$ consists of only polylogarithmically many vertices for some $\Lambda = O(\log m)$. Both core graphs and edge sparsification can be implemented in $\tilde{O}(m)$ randomized time via the classic techniques of Frederickson \cite{frederickson1985data} and the seminal result on expander decompositions from \cite{SW19}.

\paragraph{Controlling the Recourse in the Dynamic Algorithm.}
We next outline how to dynamize the algorithm above. As mentioned, the key challenge is controlling the recourse across the hierarchy. Recall that we cannot afford recourse per level larger than 1, as this would result in unacceptable blow-up over the levels. However, since edge insertions do not affect the connectivity certificates (i.e., the forests) of higher levels in the hierarchy, we can essentially ignore insertions and add them only to the top level graph $\hat{G}_{\Lambda}$.

On the other hand, an edge deletion might indeed require us to restore some connectivity information. There are three possible scenarios for how the deletion of an edge $e$ can affect a level $i$ in the hierarchy:
\begin{enumerate}
    \item if $e$ is in $F_i$, the deletion results in a vertex split of a vertex in $\hat{G}_i$ (and $\hat{H}_i$).
    \item if $e$ is in $\hat{H}_i$, then it can be processed as an edge deletion to our edge sparsifier.
    \item if $e$ is in neither, then we can ignore the effect of removing $e$ as no certificate at a higher level requires it.
\end{enumerate}

We show that in both of the first two cases, we can compensate for the deletion of $e$ by adding only $\tilde{O}(1)$ new edges to $\hat{H}_i$. Note that adding edges can be treated like edge insertions and thus effectively ignored and forwarded to the highest level of the hierarchy.

Of course, once we have added many edges to $\hat{G}_{\Lambda}$ (and $\hat{H}_\Lambda$), these graphs are no longer of polylogarithmic size. This can be mitigated by rebuilding graphs at level $i$ and higher every $2^i$ steps and factoring in edges that have been inserted since the last rebuild. Our approach achieves $\tilde{O}(1)$ worst-case update time under the assumption that edge deletions result at each level in adding at most $\tilde{O}(1)$ edges to the sparsifier $\hat{H}_i$. To do so, we distribute the computation of a rebuild over sufficiently many updates.  

\paragraph{Dynamic Edge Sparsification.} It remains to describe how to dynamize our edge sparsification procedure. Recall that we only have to react to vertex splits/edge deletions.

Let us first discuss how to deal with the more standard update of an edge deletion: when an edge $e$ is deleted from an expander graph $\hat{G}[X_\ell]$, we apply \emph{expander pruning} techniques to prune out a set $P_\ell \subseteq X_\ell$ of volume $\tilde{O}(1)$ such that $\hat{G}_i[X_\ell \setminus P_\ell]$ is still a $\tilde{\Omega}(1)$-expander. We rely on the recent new expander pruning algorithm from \cite{meierhans2025expanderpruningpolylogarithmicworstcase} that allows us to process many edge deletions to the expander while growing $P_i$ only $\tilde{O}(1)$ in volume at each step.\footnote{The seminal work in \cite{SW19} achieves this guarantee only via amortization; some steps might not grow $P_i$ at all, some grow it significantly. It is crucial for our algorithm to have worst-case guarantees. } 

Since the added volume to $P_\ell$ is in $\tilde{O}(1)$, we can afford to simply add all edges incident to $P_\ell$ to $\hat{H}_i$ (only one of the expander graphs can contain the deleted edge $e$). We are almost done restoring the invariants from initialization. However, we still have to deal with the case that $T_{\ell}$ might be no longer spanning $X_\ell \setminus P_\ell$ but instead only be a forest (for technical reasons, we delete edges incident to $P_\ell$ from $T_\ell$, however, crucially, we do not remove them from $\hat{H}_i$). In this case, let $C$ be the smallest connected component of the forest, then we have from the expander property that a $\tilde{\Omega}(1)$-fraction of the edges incident to $C$ leave $C$. We use random sampling to find such an edge and add it to $\hat{T}_i$ and $\hat{H}_i$. We then repeat this procedure until $T_{\ell}$ is spanning again.

Finally, the case of a vertex split can be handled similarly: by carefully constructing our hierarchy, we can ensure that each vertex in $\hat{G}_i$ is incident to at most $\tilde{O}(1)$ edges. We then simulate a vertex split by removing all incident edges to the vertex and then re-inserting the two resulting vertices with all their edges. This process can be simulated with an additional blow-up by a $\tilde{O}(1)$-factor in the recourse in terms of insertions.

\paragraph{Putting It All Together.} Let us briefly review the algorithm: we have given an implementation of the dynamic core graph framework where we obtain extremely tight control over the recourse by only passing edge deletions up the hierarchy to be processed and by collecting $\tilde{O}(1)$ new edges to be inserted to potentially recover connectivity and adding them only to the top level graph. To keep graphs small across the hierarchy, we use periodic rebuilds.

In our discussion, we have omitted many details to highlight the main conceptual ideas. The reader is referred to \Cref{sec:mainSec} for rigorous definitions and proofs.

\paragraph{De-randomization of the dynamic connectivity data structure.} Finally, let us briefly sketch how to obtain a deterministic data structure by following the same framework. Assume that edge sparsifier $\hat{H}_i$ can be computed along with an embedding $\Pi_{\hat{G}_i \mapsto \hat{H}_i}$ from $\hat{G}_i$ into $\hat{H}_i$ with congestion  $\alpha$ (see \Cref{def:graphEmbedding}). Consider the following dynamic update strategy: whenever an edge $e$ is deleted from $\hat{H}_i$, find all edges $e'$ from $\hat{G}_i$ with $e \in \Pi_{\hat{G}_i \mapsto \hat{H}_i}(e')$ and simply add them to $\hat{H}_i$. Clearly, the algorithm only adds at most $\alpha$ edges at each level. And, clearly, $\hat{H}_i$ preserves the connectivity information of $\hat{G}_i$ at all times: for every edge $e'$ in $\hat{G}_i$, either there still is an explicit path between its endpoints in $\hat{H}_i$, or the edge $e'$ is itself in $\hat{H}_i$. For $\alpha = \tilde{O}(1)$ and assuming such a sparsifier $\hat{H}_i$ along with an embedding can be computed in deterministic near-linear time, this yields a deterministic dynamic connectivity algorithm with polylogarithmic worst-case update time.

We give a more detailed discussion of the reduction and other approaches towards efficient de-randomization in \Cref{sec:derandomziation}.

\section{Preliminaries}

\paragraph{Misc. } We let $[n] = 1, \ldots, n$, and $[n]_0 = 0, \ldots, n$. 

\paragraph{Graphs. } We let $G = (V, E)$ denote a graph with vertex set $V$ and edge set $E \subseteq V \times V$. For $v \in V$, we let $\deg_G(v)$ denote the number of edges adjacent to vertex $v$ and for any $A \subseteq V$, we let $\vol_G(A) = \sum_{a \in A} \deg_G(a)$ denote the volume of $A$.

For $A,B \subseteq V$, we let $E(A, B) \defeq \{(u,v) \in E: u \in A \wedge v \in B\}$ be the set of edges with one endpoint in $A$ and one endpoint in $B$. For $A \subset V$, we let $G[A]$ denote the induced subgraph on $A$, i.e. $G[A] = (A, E(A,A))$. For notational convenience, we sometimes write $G \cup E'$ instead of $(V,E \cup E')$ and $G \setminus E'$ instead of $(V, E \setminus E')$. 

In a graph $G = (V, E)$, we call a vertex set $\emptyset \subset A \subseteq V$ connected (in $G$) if $E(B, A \setminus B) \neq \emptyset$ for all $B$ such that $\emptyset \subsetneq B \subsetneq A$.  If additionally $E(A, V \setminus A) = \emptyset$, the we refer to $A$ as a connected component of $G$. We call the partition of the vertex set into such components the connected components of $G$.

\paragraph{Trees and Forests. } We call a graph $G = (V, E)$ a tree if it is connected and has exactly $|V| - 1$ edges. We call a collections of trees a forest. We usually use $T$ and $F$ to refer to graphs that are trees and forests respectively.

We call a forest $F$ a \emph{spanning forest} of $G$ if they share the same vertex set and $F \subseteq G$ and say it is a \emph{maximal spanning forest} if $F$ and $G$ additionally share the same connected components.

\paragraph{Dynamic Graphs. } A dynamic graph $G$ is a graph whose vertex and edge sets change over time. Formally, we define such a dynamic graph as a tuple of sequences $((G^{(i)})_{i \in [T]_0}, (U^{(i)})_{i \in [T]})$. Then, the entry $G^{(0)}$ refers to the intitial graph, and $G^{(i)}$ is obtained by applying the update sequence $U^{(i)}$ to $G^{(i - 1)}$. An update sequence $U^{(i)}$ may contain updates of the following types. 
\begin{enumerate}
    \item \underline{Edge Insertion:} Adds a new edge to the edge set of the graph. \label{update:insert} 
    \item \underline{Edge Deletion:} Deletes an edge from the edge set of the graph. \label{update:delete} 
    \item \underline{Vertex Split:} Replaces a vertex $v$ in the graph with two new vertices $v'$ and $v''$. Every edge incident to $v$ becomes incident to either $v'$ or $v''$. \label{update:split} 
\end{enumerate}

\section{Dynamic Maximal Spanning Forest via Core Graphs}
\label{sec:mainSec}
In this section, we describe how to maintain a maximal spanning forest of a graph with only polylogarithmic update time.

\begin{restatable}{theorem}{mainThm}\label{thm:mainTech}
Assume to be given an edge-dynamic graph $G=(V,E)$, initially consisting of $m$ edges undergoing a sequence of at most $m$ edge insertions/deletions. Then, there is a data structure with initialization time $\tilde{O}(m)$ that explicitly maintains an edge-dynamic forest $F \subseteq G$ at any step with worst-case update time $\tilde{O}(1)$ (holding w.h.p.), such that $F$ remains a maximal spanning forest of $G$ at all times.
\end{restatable}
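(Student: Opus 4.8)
The plan is to implement the dynamic core graph hierarchy sketched in the overview, and to prove the theorem by induction over the $\Lambda = \Theta(\log_\kappa m)$ levels with $\kappa = \tilde O(1)$. I would first fix the invariants that each level must satisfy: (i) $\hat G_0 = G$; (ii) for $i \geq 0$, $\hat H_i \subseteq \hat G_i$ is a connectivity edge-sparsifier of $\hat G_i$ certified by spanning trees $\{T_\ell\}$ of the pieces of a $\tilde\Omega(1)$-expander decomposition together with the (at most $m/4$ worth of) inter-cluster edges, so $\hat H_i$ has at most half the edges of $\hat G_i$ plus the $\tilde O(1)$ repair edges accumulated since the last rebuild; (iii) for $i>0$, $\hat G_i = \mathcal C(\hat H_{i-1}, F_i)$ where $F_i$ is a forest of $\hat H_{i-1}$ each of whose components has volume at most $\kappa$, so every super-vertex of $\hat G_i$ has degree at most $\kappa = \tilde O(1)$; (iv) the forests $F_1,\dots,F_\Lambda$ together with the top-level tree form, after un-contraction, a maximal spanning forest $F$ of $G$. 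The initialization in $\tilde O(m)$ time is immediate from Frederickson's sparsification/low-volume-forest construction \cite{frederickson1985data} and the expander decomposition of \cite{SW19}, applied once at each of the $O(\log m)$ levels on geometrically shrinking graphs.

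Next I would describe the update algorithm and the recourse accounting, which is the heart of the proof. Edge insertions are routed only to $\hat G_\Lambda$ (and handled by a trivial data structure there since it has size $\tilde O(1)$), since insertions never invalidate a connectivity certificate at any level. For a deletion of edge $e$, I process it level by level: if $e \notin F_i$ and $e \notin \hat H_i$ it is ignored at level $i$; if $e \in \hat H_i$ lies inside an expander piece $\hat G_i[X_\ell]$, invoke the worst-case expander pruning of \cite{meierhans2025expanderpruningpolylogarithmicworstcase} to grow the pruned set $P_\ell$ by $\tilde O(1)$ volume, add all edges incident to the newly pruned vertices to $\hat H_i$ (these are treated as insertions, hence forwarded upward), and if $T_\ell$ is no longer spanning $X_\ell\setminus P_\ell$, repeatedly take the smallest forest component $C$, use random sampling against the expander guarantee (a $\tilde\Omega(1)$-fraction of edges at $C$ leave $C$) to find an outgoing edge in $O(\log n)$ expected samples, and add it to $T_\ell$ and $\hat H_i$; if $e \in F_i$, the component of $F_i$ splits, causing a vertex split in $\hat G_i$ (hence in $\hat H_i$), which I simulate — using the degree bound $\deg \leq \kappa$ — by deleting the $\le\kappa$ incident edges and re-inserting the two new vertices with their edges, i.e. $\tilde O(1)$ deletions and insertions fed recursively into level $i$ and forwarded upward. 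The key claim to prove here: each original deletion causes at most $\tilde O(1)$ edges to be \emph{added} to each $\hat H_i$, and at most $\tilde O(1)$ edge operations total at level $i$, so that after forwarding upward the total work across all $\Lambda$ levels is $\tilde O(1)$ per update — this uses crucially that the per-level recourse of genuinely new edges is $\tilde O(1)$ (forwarded, not multiplied) while structural changes at a level are bounded by the degree bound $\kappa$.

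To keep the graphs small despite accumulating forwarded edges, I would then describe the rebuild schedule: the subhierarchy at levels $\geq i$ is rebuilt from scratch every $2^i$ updates, incorporating all edges inserted-or-forwarded since its last rebuild; since between rebuilds at most $2^i \cdot \tilde O(1)$ edges accumulate at level $i$ and the rebuilt $\hat H_i$ has at most half the edges of $\hat G_i$, one shows inductively that $|\hat H_i| = \tilde O(m/2^i)$ always, so $\hat G_\Lambda$ stays of polylogarithmic size. A rebuild of the level-$i$ subhierarchy costs $\tilde O(m/2^i)$ time by the initialization bound, so spreading it evenly over the next $2^i/2$ updates gives $\tilde O(1)$ worst-case per update; summing the geometric series over all $\Lambda$ levels keeps the total at $\tilde O(1)$. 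Correctness — that $F$ is a maximal spanning forest at every step — follows from invariant (iv) plus the fact that each $\hat H_i$ preserves the connected components of $\hat G_i$ (spanning trees on expander pieces, inter-cluster edges kept verbatim, repair edges only added) and contraction preserves connectivity; one must check this is maintained through pruning, re-spanning, vertex splits, and rebuilds, and that the output forest undergoes at most $2$ edge changes per update (a tree-edge swap), which I would argue by a link-cut-tree-style bookkeeping on top of the hierarchy.

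The main obstacle I anticipate is the recourse analysis under an \emph{adaptive} adversary: the random sampling used to find edges leaving a small forest component, and the randomness in the expander decomposition, must be shown to still give $\tilde O(1)$ \emph{expected} work even when the adversary sees the spanning forest. The resolution is that the only randomized primitive touched on a deletion is the sampling-for-an-outgoing-edge step, whose success probability depends only on the (adversary-independent) expander guarantee of the current pruned graph, not on hidden random bits, so its $O(\log n)$-expected cost is robust; and expander decompositions are only recomputed at rebuilds on graphs whose contents are determined by past updates, so a fresh independent decomposition each time suffices. Making the "per-level recourse $\le 1$ in the relevant sense" precise — i.e. that the $\tilde O(1)$ factors accumulate additively across the $\Lambda$ levels rather than multiplicatively — is the delicate bookkeeping that the rest of \Cref{sec:mainSec} must nail down, and I expect it to require a carefully stated amortized-to-worst-case potential argument for the rebuild scheduling.
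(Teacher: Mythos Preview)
Your plan follows the paper's route closely, but the rebuild schedule as you state it is inverted and breaks the runtime bound. You write that levels $\geq i$ are rebuilt every $2^i$ updates and that spreading the $\tilde O(m/2^i)$ rebuild cost over $2^i/2$ updates gives $\tilde O(1)$ per update; it does not---it gives $\tilde O(m/4^i)$, which is $\tilde O(m)$ at $i=0$. The correct schedule is the opposite: level $i$, of size $\tilde O(m/2^i)$, must be re-initialized every $\Theta(m/2^i)$ updates so that the rebuild cost matches the period. In the paper this is $x_i = \lceil 2^{\Lambda-i-3}\rceil \approx m/2^{i+3}$, so the top (tiny) level is rebuilt at every step and the bottom (full-size) level essentially never. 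Your size invariant $|\hat H_i| = \tilde O(m/2^i)$ survives under this corrected schedule because between rebuilds at most $x_i \cdot \tilde O(i)$ insertions accumulate at level $i$, which is still $\tilde O(m/2^i)$.

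A second point where the paper's formulation is sharper and directly resolves the ``delicate bookkeeping'' you worry about: the paper keeps every $F_i$ and every $H_i$ (the pre-image of $\hat H_i$) as subgraphs of the original graph $G$ on the original vertex set $V$, rather than as subgraphs of $\hat H_{i-1}$. This makes the one-deletion-per-level claim immediate: since $F_i, H_i \subseteq G$, removing one edge $e$ from $G$ removes at most one edge from each $F_i$ and each $H_i$, and the only structural effect at level $i$ is at most a single vertex split in $\hat G_i$. All the expander pruning and tree-repair work at level $i$ then produces only edge \emph{additions} to $\hat H_i$ (the edges incident to $\hat u^{\text{OLD}},\hat v^{\text{OLD}}$ that were temporarily removed are re-inserted at the end with updated endpoints), so what propagates to level $i+1$ is exactly one deletion plus $\tilde O(i)$ insertions---additive, not multiplicative. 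With this convention the output forest is simply $F_\Lambda$ itself (already a spanning forest of $G$), the correctness invariant is just a bound on the number of components of $F_i$ versus $G$, and no un-contraction of $F_1,\dots,F_\Lambda$ is needed.
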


Note that \Cref{thm:simple} claims expected worst case update time instead of polylogarithmic worst-case update time. These guarantees are equivalent up to polylogarithmic loss in efficiency by a black box reduction from \cite{bernstein2021deamortization}.

By standard reductions (See \cite{frederickson1985data}), we assume w.l.o.g. that the input graph $G=(V,E)$ has maximum degree $3$ at all times. To further simplify the exposition, we first describe a version of our algorithm that achieves the guarantees in \Cref{thm:mainTech} with \emph{amortized} update time $\tilde{O}(1)$. We discuss a de-amortization of our algorithm via standard techniques in \Cref{sec:deamortization}.

\subsection{The Main Algorithm} 

Our algorithm maintains a hierarchy of core graphs. We start by giving the definition.

\begin{definition}[Core Graph] \label{def:core_graph}
    Given a graph $G = (V, E)$ and a spanning forest $F$ we let $\mathcal{C}(G, F)$ denote the graph obtained from $G$ via contracting the tree components in $F$ into single vertices and removing self-loops. 
\end{definition}
\begin{remark}
When $\mathcal{C}(G, F)$ is clear from context, we denote for each vertex $u \in V$, by $\hat{u}$ the image of $u$ under the contraction operation. For each edge $e = (u,v) \in E$, we denote by $\hat{e}$ either the edge $(\hat{u}, \hat{v})$ for $\hat{u} \neq \hat{v}$ or an empty pointer $\bot$ for $\hat{u} = \hat{v}$. We also say that $u$ is in the pre-image of $\hat{u}$ and $e$ is in the pre-image of $\hat{e}$.
\end{remark}

Observe that for core graph $\mathcal{C}(G, F)$ and $F \subseteq G$, we have that for every $u,v\in V$, $u$ and $v$ are connected in $G$ iff $\hat{u}$ and $\hat{v}$ are connected in $\mathcal{C}(G, F)$. This directly follows because the contracted components are connected, which is certified by the corresponding tree. We exploit this fact (often implicitly) throughout.

As in previous dynamic algorithms that employed core graphs, the key to efficiency of our data structure is that whenever we take a core graph, the underlying forest $F$ has each tree $T$ in $F$ incident to small volume. We make this constraint qualitative and say $F$ is $\kappa$-shattering if the volume incident to each tree is at most $\kappa$.

\begin{definition}[Shattering Forest] \label{def:shattering_forest}
Given graphs $G = (V, E)$ and $H \subseteq G$, we call a forest $F \subseteq G$ consisting of trees $T_1, \ldots, T_\eta$ a $\kappa$-shattering forest with respect to $H$ if $vol_H(V(T_i)) \leq \kappa$ for all $i \in 1, \ldots, \eta$ and $V(G) = V(F)$. 
\end{definition}

Our data structure maintains a hierarchy over $\Lambda + 1$ layers for $\Lambda = \ceil{\log_2 m} + 4$.
At initialization and after every update, we update the hierarchy one layer after another in increasing order of $i$, i.e. we process layer $i$ before processing layer $i+1$. We let time $t$ refer to the number of updates to $G$ processed, that is, initialization takes place at time $0$, and the processing after the $t$-th update to $t$ takes place at time $t$. We define for $0 \leq i \leq \Lambda$, $x_i \defeq \ceil{2^{\Lambda - i - 3}}$. Then, we \emph{re-initialize} layer $i$ at every time step $t$ divisible by $x_i$.

We let $F_0 = (V, \emptyset), \hat{G}_0 = G, \hat{H}_0 = G, H_0 = G$ at all times. At time $t$, at each layer $i = 1, \ldots, \Lambda$, the data structure maintains:
\begin{enumerate}
    \item \underline{$F_{i}$:} A spanning forest with $F_i \subseteq F_{i-1} \cup H_{i-1}$. We distinguish as follows:
    \begin{itemize}
        \item \underline{If $t$ is divisible by $x_i$:} $F_i$ is re-initialized to be $\kappa$-shattering forest w.r.t. $H_{i-1}$ where $\kappa \defeq 8 \cdot 108$. 
        \item \underline{Otherwise:} if at time $t$ an edge $e$ is deleted from $G$, we also remove $e$ from $F_i$ (to ensure $F_i \subseteq F_{i-1} \cup H_{i-1}$). Otherwise, $F_i$ is not updated.
    \end{itemize}
    
    \item \underline{$\hat{G}_i$:} The current core graph $\mathcal{C}(F_i \cup H_{i-1}, F_i)$. 
    
    \item \underline{$\hat{H}_i$:} A \emph{connectivity sparsifier} of $\hat{G}_i$ at all times. That is, $\hat{H}_i \subseteq \hat{G}_i$ and $u, v \in V_{\hat{H}_i}$ are connected $\hat{H}_i$ iff they are connected in $\hat{G}_i$. Again, we distinguish as follows:
    \begin{itemize}
        \item \underline{If $t$ is divisible by $x_i$:} we rebuild $\hat{H}_i$ to be of size at most $O(|V_{\hat{H}_i}|)$.
        \item \underline{Otherwise:} after every update to $G$, $H_{i-1} \cup F_{i}$ is updated before updating $\hat{H}_i$. For every insertion of edge $e =(u,v)$ to $H_{i-1} \cup F_{i}$, we add the edge $\hat{e} = (\hat{u}, \hat{v})$ to $\hat{H}_i$ if $\hat{u} \neq \hat{v}$ (w.r.t. $\hat{G}_i$). Further, we will show that there is at most one deletion to $H_{i-1} \cup F_{i}$ per update to $G$ at a later stage. 
        
        If such an edge deletion occurs, say edge $e = (u,v)$ is deleted from $H_{i-1} \cup F_{i}$. Let $\hat{u}^{\textit{OLD}}, \hat{v}^{\textit{OLD}}$ refer to $u, v$ in the graph $\hat{G}_i$ before processing at time $t$, and $\hat{u}^{\textit{NEW}}, \hat{v}^{\textit{NEW}}$ after. In particular, if $e \in F_i$, then $\hat{u}^{\textit{OLD}} = \hat{v}^{\textit{OLD}}$. 

        Then, we simlate that $\hat{u}^{\textit{OLD}}$ and $\hat{v}^{\textit{OLD}}$ are deleted from $\hat{H}_i$, identify a set of edges $\Delta \hat{E}_i$ that ensures that upon being added, $\hat{H}_i \setminus \{\hat{u}, \hat{v}\} \cup \Delta \hat{E}_i$ is a connectivity sparsifier of $\hat{G}_i \setminus \{\hat{u}^{\textit{NEW}}, \hat{v}^{\textit{NEW}}\}$ and finally add to $\hat{H}_i$ the vertices $\hat{u}^{\textit{NEW}}, \hat{v}^{\textit{NEW}}$ along with all their edges incident in $\hat{G}_i$. \textbf{$\hat{u} \mapsto \hat{v}^{OLD}?$}
    \end{itemize}
    \item \underline{$H_i$:} The pre-image of $\hat{H}_i$, that is $H_i = (V, \{e \;|\; \hat{e} \in E(\hat{H}_i)\})$.
\end{enumerate}

We update each of these components in the order that they are listed. We observe that every forest $F_i \subseteq G$. Further, our algorithm maintains the following key invariant for the hierarchy, whose proof is deferred to \Cref{sec:analysis_main}.

\begin{invariant}\label{inv:keyInv}
For any layer $i \in [\Lambda]$, we have 
\begin{equation*}
    \text{(\#connected components in $F_i$)} \leq \text{(\#connected components in $G$)} + 2^{\Lambda - i - 1}.
\end{equation*}
\end{invariant}

We let $F_{\Lambda}$ be the spanning forest that our algorithm outputs. By \Cref{inv:keyInv} and $F_{\Lambda} \subseteq G$, $F_{\Lambda}$ is a maximal spanning forest of $G$. 

We store each forest $F_i$ in a link-cut tree data structure with worst-case update time $O(\log^2 n)$ \cite{driscoll1986making}. This link-cut tree data structure additionally guarantees that a copy of the whole data structure can be made in worst-case time $O(\log^2 n)$.\footnote{This data structure falls into the more general category of \emph{persistent} data structures \cite{driscoll1986making}. The copying is then emulated by storing differences to the moment that data structure was copied.} 

We give the missing implementation details in the sections below and then analyze the main data structure, which yields \Cref{thm:mainTech}.

\subsection{Implementing the Re-initialization of $F_i$} \label{sec:forest_reinit}
\textbf{Below, $F_[v,x] \mapsto F_{i-1}[v,x]$}
\paragraph{The re-initialization procedure.} Recall that $F_i$ is re-initialized to be a $\kappa$-shattering forest w.r.t. $H_{i-1}$. Let us now describe the precise procedure: We let $E_i$ be the set of edges in $F_{i-1}$ that are incident to an endpoint of an edge in $E_{H_{i-1}}$. The algorithm first computes the branch free set $B_i$ of $V_{E_i}$ in $F_{i-1}$, i.e. $B_i$ is the set of all vertices in $V_{E_i}$ and all vertices $v \in V$ for which there exist three vertices $x,y,z \in V_{E_i}$ in the component of $v$ such that $F_{i - 1}[v,x], F_{i - 1}[v,y], F_{i - 1}[v, z]$ are pairwise disjoint. Then, it constructs the graph $\hat{A}_i = \mathcal{C}(F_{i-1} \cup H_{i-1}, F_{i-1} \setminus B_i)$ where $F_{i-1} \setminus B_i$ denotes the forest $F_{i-1}$ where all edges incident to $B_i$ are removed (See \Cref{fig:core_graph}). The reason for not directly computing a shattering forest of $\hat{H}_{i - 1}$ is that it does not have sufficiently bounded degree. 

\begin{figure}[ht]
    \centering
    \includegraphics[width=0.5\linewidth]{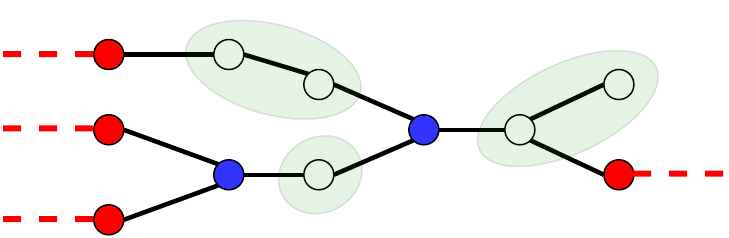}
    \caption{The black edges form a tree of $F_{i - 1}$, and the dashed red edges are in $H_{i - 1}$. Then, the (red) endpoints of the edges in $H_{i - 1}$ are included in the branch free set $B_i$. Finally, the blue vertices are added to $B_i$, such that the paths between adjacent vertices in $B_i$ become disjoint. The green shaded parts are contracted to obtain graph $\hat{A}_i$. The graph inherits maximum degree $3$ from $G$.}
    \label{fig:core_graph}
\end{figure}

We then take an arbitrary maximal spanning forest $\hat{F}_i$ of $\hat{A}_i$. For each component $\hat{T}$ in $\hat{F}_i$ of size at least $\kappa$, we invoke the algorithm in \Cref{lma:clusteringFrederikson} below with parameter $\kappa/9$. We let $\hat{E}_i$ be the set of all edges returned by these procedures. 

\begin{lemma}[see \cite{frederickson1985data}, Lemma 1]\label{lma:clusteringFrederikson}
For an $n$-vertex connected tree $T$ with maximum degree $3$, and a positive integer $z$ such that $n \geq 3z-2$, there is an algorithm that runs in time $O(n)$ and returns a set $E' \subseteq E(T)$ such that each connected component in $T \setminus E'$ contains at least $z$ and at most $3z-2$ vertices.
\end{lemma}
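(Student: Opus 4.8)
The plan is to produce $E'$ by a single bottom-up greedy pass over $T$. First I would root $T$ at an arbitrary leaf $r$ (if $n = 1$ then $z = 1$ and the single vertex is already a valid component, so assume $n \geq 2$; a leaf exists since $T$ is a tree); as $T$ has maximum degree $3$, the root has exactly one child and every other vertex has at most two children. I then process the vertices in post-order, maintaining for each $v$ a counter $\mathrm{cnt}(v)$ equal to the number of not-yet-finalized vertices in the subtree of $v$ that are still connected to $v$. When $v$ is processed I set $\mathrm{cnt}(v) = 1 + \sum_c \mathrm{cnt}(c)$, the sum over children $c$ of $v$ whose edge to $v$ is not (yet) in $E'$; if $\mathrm{cnt}(v) \geq z$ I add $(v,\mathrm{parent}(v))$ to $E'$ --- finalizing a component consisting of $v$ together with the not-yet-finalized part of its subtree --- and thereafter treat $\mathrm{cnt}(v)$ as $0$ when reporting to $\mathrm{parent}(v)$; otherwise I leave the edge and report $\mathrm{cnt}(v)$ upward.

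The correctness of the component sizes follows from one invariant, proved by induction along the post-order: every vertex reports a value in $\{0, 1, \ldots, z-1\}$ to its parent (a finalized vertex reports $0$; a non-finalized vertex satisfies $\mathrm{cnt}(v) \leq z-1$ by the branch it took). Hence whenever we cut at a vertex $v$ we have $z \leq \mathrm{cnt}(v) = 1 + \sum_c \mathrm{cnt}(c) \leq 1 + 2(z-1) = 2z-1$, using that $v$ has at most two children each reporting at most $z-1$. So every component finalized during the traversal has size in $[z, 2z-1] \subseteq [z, 3z-2]$, and the pass runs in $O(1)$ time per vertex.

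What remains --- and this is the only step requiring any thought --- is the component $D$ containing the root $r$, which is never cut from above and might therefore have $|D| < z$. If $|D| \geq z$ we are done. Otherwise $|D| \leq z-1 < n$ (here the hypothesis $n \geq 3z-2 \geq z$ is used), so $D$ is a proper subtree of $T$; walking the rooted path from $r$ towards any vertex outside $D$ until it first leaves $D$ exhibits a cut edge $(u, \mathrm{parent}(u)) \in E'$ with $\mathrm{parent}(u) \in D$. Removing that edge from $E'$ merges $D$ with $u$'s component into a component of size at most $(z-1) + (2z-1) = 3z-2$ and at least $|\mathrm{comp}(u)| \geq z$, leaving every other component unchanged. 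This finding-and-merging step costs $O(n)$, so the total running time is $O(n)$. The main (minor) obstacle is precisely guaranteeing the existence of that boundary cut edge: it fails only in the degenerate scenario where no edge of $T$ is ever cut, which $n \geq 3z-2$ rules out.
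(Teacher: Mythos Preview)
The paper does not prove this lemma; it simply cites Frederickson's original result as a black box. Your argument is correct and is essentially the classical proof from that reference: a single post-order pass that cuts above any vertex whose pending subtree reaches size $z$ (yielding components of size in $[z,2z-1]$ because each non-root vertex has at most two children reporting at most $z-1$ each), followed by one merge at the root to absorb a possibly undersized leftover component, which is where the $3z-2$ upper bound and the hypothesis $n\geq 3z-2$ are actually used. The only point worth noting is that rooting at a leaf is what guarantees $\mathrm{cnt}(r)\leq 1+(z-1)=z$, so the root component is small enough that merging with a single neighbouring finalized component of size $\leq 2z-1$ stays within $3z-2$; you use this implicitly but it is the reason the leaf-rooting matters.
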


For components of size less than $\kappa$, we compute an arbitrary spanning tree. Finally, we let $F_i$ consist of the union formed by the pre-image of $\hat{F}_i \setminus \hat{E}_i$ and forest $F_{i-1} \setminus B$. 

\paragraph{Analysis.} Let us briefly analyze the re-initialization procedure.

\begin{claim}\label{clm:correctnessKappaShattering}
After re-initialization, $F_i$ is $\kappa$-shattering with respect to $H_{i-1}$. For every connected component $C$ of $G$, the forest $F_i[C]$ is either a tree, or it consists of at most $\frac{108}{\kappa} \cdot |E_{H_{i-1}[C]}|$ components. Finally, $|E_{\hat{A}_i}| \leq 5 \cdot |E_{H_{i-1}}|$. 
\end{claim}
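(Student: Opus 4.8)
The plan is to verify the three assertions in turn, each by tracing through the re-initialization procedure. For the $\kappa$-shattering property: recall $F_i$ is the union of the pre-image of $\hat{F}_i \setminus \hat{E}_i$ with $F_{i-1}\setminus B_i$. Each tree $T$ of $F_i$ thus corresponds to a connected component $\hat{T}$ of $\hat{F}_i \setminus \hat{E}_i$ in $\hat{A}_i = \mathcal{C}(F_{i-1}\cup H_{i-1}, F_{i-1}\setminus B_i)$. I would argue that $\vol_{H_{i-1}}(V(T))$ is controlled by the number of super-vertices of $\hat{A}_i$ in $\hat T$: since $\hat{A}_i$ is obtained by contracting the components of $F_{i-1}\setminus B_i$, and since every endpoint of an $H_{i-1}$-edge lies in $B_i$ (so is its own super-vertex), each unit of $H_{i-1}$-volume in $V(T)$ is charged to a distinct super-vertex of $\hat T$. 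By \Cref{lma:clusteringFrederikson} applied with $z=\kappa/9$ (using that $\hat{A}_i$ has maximum degree $3$, inherited from $G$, so the hypothesis $n\ge 3z-2$ is the one we split on), every component $\hat T$ after removing $\hat E_i$ has at most $3(\kappa/9)-2 < \kappa/3$ super-vertices when $\hat T$ was large, and fewer than $\kappa$ trivially when $\hat T$ was small. Combined with the degree-$3$ bound this yields $\vol_{H_{i-1}}(V(T))\le \kappa$ after choosing the constants appropriately; the factor $8\cdot 108$ in $\kappa$ is exactly the slack needed to absorb the degree bound and the $3z-2$ rounding.

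For the second assertion, fix a connected component $C$ of $G$. If $H_{i-1}[C]$ has no edges at all, then $E_i$ restricted to $C$ is empty, $B_i\cap C=\emptyset$, and the whole of $F_{i-1}[C]$ survives into $F_i[C]$; since $F_{i-1}$ was a maximal spanning forest of its graph restricted to $C$ (it was itself built to be $\kappa$-shattering, hence spanning within components), $F_i[C]$ is a single tree. Otherwise, the number of components of $F_i[C]$ is at most the number of components of $\hat{F}_i\setminus \hat{E}_i$ inside the image of $C$, which is at most one plus $|\hat E_i|$ restricted to that image. Each invocation of \Cref{lma:clusteringFrederikson} on a tree with $n$ super-vertices returns at most $n/z = 9n/\kappa$ edges, so $|\hat E_i|$ restricted to $C$ is at most $9/\kappa$ times the number of super-vertices of $\hat A_i$ in $C$. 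It then remains to bound the number of super-vertices of $\hat A_i$ in $C$ by $O(|E_{H_{i-1}[C]}|)$: the super-vertices are exactly the components of $F_{i-1}\setminus B_i$ meeting $C$, and $|B_i\cap C| = O(|V_{E_i}\cap C|) = O(|E_{H_{i-1}[C]}|)$ because $B_i$ is the branch-free closure of $V_{E_i}$ in a degree-$3$ forest (the branch-free closure of a $k$-element set in a bounded-degree tree has size $O(k)$, since every added branch vertex is a degree-$3$ Steiner point of at least three terminals and such points number at most $k$), and removing $O(|E_{H_{i-1}[C]}|)$ vertices from a forest creates $O(|E_{H_{i-1}[C]}|)$ new components. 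Tracking constants carefully gives the stated bound $\frac{108}{\kappa}\cdot |E_{H_{i-1}[C]}|$.

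For the third assertion, $|E_{\hat A_i}|\le 5\cdot |E_{H_{i-1}}|$: the edge set of $\hat A_i = \mathcal{C}(F_{i-1}\cup H_{i-1}, F_{i-1}\setminus B_i)$ after removing self-loops consists of (i) all edges of $H_{i-1}$ and (ii) the edges of $F_{i-1}$ incident to $B_i$ that do not become self-loops. There are at most $|E_{H_{i-1}}|$ of type (i). For type (ii), each such edge is one of the (at most $3$, by degree $3$) $F_{i-1}$-edges at a vertex of $B_i$, so their number is at most $3|B_i| = O(|V_{E_i}|) = O(|E_{H_{i-1}}|)$; pinning down the constant via $|B_i|\le 2|V_{E_i}|$ and $|V_{E_i}|\le 2|E_{H_{i-1}}|$ (each $H_{i-1}$-edge contributes at most its two endpoints, and $E_i$ is the set of $F_{i-1}$-edges touching these, of which there are at most... ) yields the bound $5|E_{H_{i-1}}|$.

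The main obstacle I anticipate is not conceptual but bookkeeping: getting the explicit constants $\kappa = 8\cdot 108$, the factor $108/\kappa$, and the factor $5$ to come out exactly requires being precise about (a) how many $F_{i-1}$-edges are incident to a given set of vertices in a degree-$3$ forest, (b) the exact size of the branch-free closure $B_i$ relative to $|V_{E_i}|$, and (c) the $3z-2$ slack in \Cref{lma:clusteringFrederikson} together with the translation between "number of super-vertices of a component $\hat T$" and "$H_{i-1}$-volume of its pre-image." Each of these is an elementary inequality about trees, but chaining them while keeping the constants tight is where the care must go; the qualitative statement (each bound is $O(|E_{H_{i-1}}|)$ or $O(|E_{H_{i-1}[C]}|/\kappa)$) is essentially immediate from the construction.
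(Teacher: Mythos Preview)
Your plan follows essentially the same approach as the paper's proof: bound $|B_i|$ via the branch-free closure, establish that $\hat{A}_i$ has maximum degree $3$, apply \Cref{lma:clusteringFrederikson} to control both the $\kappa$-shattering property and the component count, and finally count edges of $\hat{A}_i$.

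Two small points where the paper is slightly sharper than your sketch. First, the degree-$3$ bound on $\hat{A}_i$ is not simply ``inherited from $G$'': a super-vertex $\hat u$ arising from a non-singleton contracted set $U$ could in principle have many $F_{i-1}$-edges leaving $U$; the paper argues separately that such a $\hat u$ has at most two $F_{i-1}$-neighbors (otherwise some vertex of $U$ would be a branch point and hence lie in $B_i$) and zero $H_{i-1}$-neighbors (since all $H_{i-1}$-endpoints are in $B_i$). Second, for the bound $|E_{\hat A_i}|\le 5|E_{H_{i-1}}|$, your count of ``at most $3|B_i|$ edges incident to $B_i$'' is too coarse to reach the constant $5$; the paper instead observes that the $F_{i-1}$-edges surviving in $\hat A_i$ are exactly the first and last edges of each $B_i$-to-$B_i$ path in $F_{i-1}$, and there are at most $|B_i|-1$ such paths per tree, giving $\le 2|B_i|$ rather than $3|B_i|$. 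Neither point is a conceptual gap---you correctly flagged constant-tracking as the delicate part---but you would need these refinements to land on $5$ exactly.
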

\begin{proof}
By construction, we have for each $u \in B_i$ that $\hat{u}$ in core graph $\hat{A}_i$ have $u = \hat{u}$. Since $V(E_{H_{i-1}}) \subseteq B_i$, we have that every super-vertex $\hat{u}$ in $\hat{A}_i$ stemming from contraction of a non-singleton set $U$ is spanned by some tree $T_U \subseteq F_{i-1}$ that is not incident to \emph{any} edge in $H_{i-1}$.

We next prove that $A_i$ has maximum degree $3$ and thus the assumption of \Cref{lma:clusteringFrederikson} is satisfied for $\hat{F}_i \subseteq \hat{A}_i$: we first note that vertices $\hat{u}$ with $\hat{u} = u$ for some $u \in V$ have maximum degree $3$ since $F_{i-1} \cup H_{i-1} \subseteq G$ and $G$ having maximum degree $3$. Further observe that each $\hat{u}$ stemming from the contraction of a non-singleton set $U$ has at most $2$ edges leaving in $F_{i-1}$. Otherwise, we can find for each subtree $T_1, T_2, \ldots, T_k$ for $k \geq 3$ dangling from these edges at least one vertex $b_1, b_2, \ldots, b_k$ from $B_i$ in the respective subtree which implies that for some vertex $u \in U$, the intersection of paths $F_{i - 1}[b_1, b_2], F_{i - 1}[b_2, b_3], F_{i - 1}[b_1, b_3]$ is $u$ and thus $u$ is a branching vertex. But since $B_i \cap U$ is empty, this yields a contradiction. 

Next, fix any connected component $C$ of $G$. We use the well-known fact that the branch free set $B_i \cap C$ of set $|V_{E_{H_{i-1}[C]}}|$ is at most twice as large as the underlying set, i.e. $|B_i \cap C| \leq 2 \cdot |V_{E_{H_{i-1}[C]}}| \leq 4 \cdot |E_{H_{i-1[C]}}|$. It follows that, for $\hat{C}$ being the connected component $C$ after applying the same contractions that produced $\mathcal{A}_i$, we have $|V_{\mathcal{A}_i[\hat{C}]}| \leq 12 \cdot |E_{H_{i-1}[C]}|$ since there are no edges between vertices $\hat{u}$ stemming from the contraction of non-singleton sets and the vertices $\hat{u}$ stemming from singleton sets have maximum degree $3$. This yields, by \Cref{lma:clusteringFrederikson}, that $\hat{F}_i[C] \setminus \hat{E}_i[C]$ is $\kappa$-shattering w.r.t. the graph $\hat{A}_i$ (each cluster has at most $3\kappa/9 -2 < \kappa/3$ vertices, each of degree at most $3$) and that there are at most $|V_{\mathcal{A}_i}|/ (\kappa/9) \leq \frac{108}{\kappa} \cdot |E_{H_i}|$ connected components.

It then remains that 'uncontracting' the vertices $\hat{u}$ stemming from non-singleton sets $U$ in $\hat{F}_i[C] \setminus \hat{E}_i[C]$ such that the trees $T_U$ spanning $U$ are added again, does not connect any vertices in $B_i \cap C$, but also does not create additional components. Thus, $F_i[C]$ is $\kappa$-shattering w.r.t. $H_{i-1}[C] \subseteq F_{i-1}[C] \cup H_{i-1}[C]$ and the number of components is again at most $\frac{108}{\kappa} \cdot |E_{H_i[C]}|$.

Finally, the number of edges from $F_{i - 1}$ that are in $\hat{A}_i$ can be bounded because contracting the paths between vertices in the branch free sets yields trees with at most $2 \cdot |E_{H_{i-1}}|$ vertices, and therefore also edges. For every such contracted path, we add at most $2$ edges to $\hat{A}_i$. Therefore, the total number of edges in $\hat{A}_i$ is bounded by $5 \cdot |E_{H_{i - 1}}|$ as desired. 
\end{proof}

\begin{claim} \label{clm:forsest_rt}
The algorithm to re-initialize forest $F_i$ can be implemented in time $\tilde{O}(|E_{H_{i - 1}}|)$. 
\end{claim}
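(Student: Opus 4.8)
The plan is to process the five stages of the re-initialization of $F_i$ in order — computing $E_i$; computing the branch-free set $B_i$; assembling the contracted graph $\hat{A}_i$; computing a maximal spanning forest $\hat{F}_i$ of $\hat{A}_i$ together with the clustering edges $\hat{E}_i$; and assembling $F_i$ — and to argue that each stage runs in $\tilde{O}(|E_{H_{i-1}}|)$ time. Two principles guide the argument: (i) every auxiliary object built along the way has size $\tilde{O}(|E_{H_{i-1}}|)$, and (ii) we never inspect or rebuild the part of $F_{i-1}$ that lies far from $E_{H_{i-1}}$; in particular, $F_i$ is produced by \emph{editing a copy} of the link-cut tree of $F_{i-1}$ rather than by building one from scratch.

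For the size bounds, I would first use that $G$, and hence $F_{i-1}$, has maximum degree $3$: thus $|V_{E_{H_{i-1}}}| \le 2|E_{H_{i-1}}|$, each such vertex has at most $3$ incident $F_{i-1}$-edges, so $|E_i| = O(|E_{H_{i-1}}|)$ and $|V_{E_i}| = O(|E_{H_{i-1}}|)$, and $E_i$ is read off in $\tilde{O}(|E_{H_{i-1}}|)$ time by listing the endpoints of $E_{H_{i-1}}$ and their constant-size tree-neighborhoods in $F_{i-1}$. The branch-free-set bound already invoked in \Cref{clm:correctnessKappaShattering} gives $|B_i| \le 2|V_{E_i}| = O(|E_{H_{i-1}}|)$, and \Cref{clm:correctnessKappaShattering} itself gives $|E_{\hat{A}_i}| \le 5|E_{H_{i-1}}|$, so $B_i$, $\hat{A}_i$, $\hat{F}_i$ and $\hat{E}_i$ all have size $\tilde{O}(|E_{H_{i-1}}|)$.

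The delicate stage — and the one I expect to be the main obstacle — is computing $B_i$ and assembling $\hat{A}_i$ without walking along long paths of $F_{i-1}$, since a single tree of $F_{i-1}$ may contain far more than $|E_{H_{i-1}}|$ vertices. The key observation is that $B_i$ is exactly the vertex set of the Steiner skeleton (virtual tree) of the terminal set $V_{E_i}$ inside $F_{i-1}$ — the terminals plus the vertices where three terminal-to-terminal paths branch — which has only $O(|V_{E_i}|)$ nodes. I would build it by the standard virtual-tree construction: sort the terminals by a fixed Euler-tour order of $F_{i-1}$, then sweep them with a stack using lowest-common-ancestor queries. This uses $O(|V_{E_i}|)$ LCA/depth/ancestor (and first-edge-on-path) queries, each answerable in $\tilde{O}(1)$ time by the link-cut tree of $F_{i-1}$ (augmented, if needed, with an Euler-tour order, which preserves the $O(\log^2 n)$ update and copy bounds). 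Given the skeleton, $\hat{A}_i = \mathcal{C}(F_{i-1}\cup H_{i-1}, F_{i-1}\setminus B_i)$ is read off directly from it and from the constant-size $F_{i-1}$-neighborhoods of the vertices of $B_i$: its super-vertices are the $O(|B_i|)$ components of $F_{i-1}\setminus B_i$ incident to $B_i$ — the (nonempty) interior path of each skeleton edge, and the terminal-free subtree reached by each $F_{i-1}$-edge at a $B_i$-vertex that does not start such a path — and its edges are the $H_{i-1}$-edges together with the at most $3|B_i|$ images of $F_{i-1}$-edges incident to $B_i$. By \Cref{clm:correctnessKappaShattering} this amounts to only $O(|E_{H_{i-1}}|)$ super-vertices and edges, each identified with $O(1)$ link-cut queries, so this stage costs $\tilde{O}(|E_{H_{i-1}}|)$.

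For the remaining stages: a maximal spanning forest $\hat{F}_i$ of $\hat{A}_i$ is computed by a standard graph search in $O(|E_{\hat{A}_i}|) = \tilde{O}(|E_{H_{i-1}}|)$ time; applying \Cref{lma:clusteringFrederikson} with parameter $\kappa/9$ to each component of $\hat{F}_i$ of size at least $\kappa$ takes time linear in that component, hence $O(|V_{\hat{F}_i}|) = \tilde{O}(|E_{H_{i-1}}|)$ in total (the smaller components are handled by arbitrary spanning trees within the same budget). Finally, I would assemble the link-cut tree of $F_i = (F_{i-1}\setminus B_i)\cup(\text{pre-image of }\hat{F}_i\setminus\hat{E}_i)$ by copying the link-cut tree of $F_{i-1}$ in $O(\log^2 n)$ time, cutting the at most $3|B_i| = O(|E_{H_{i-1}}|)$ edges of $F_{i-1}$ incident to $B_i$, and then linking the $\tilde{O}(|E_{H_{i-1}}|)$ pre-image edges of $\hat{F}_i\setminus\hat{E}_i$; each cut and link costs $O(\log^2 n)$. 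Summing over the stages gives the claimed $\tilde{O}(|E_{H_{i-1}}|)$ running time, while correctness of the resulting $F_i$ is already guaranteed by \Cref{clm:correctnessKappaShattering}. Besides the local computation of $B_i$ and $\hat{A}_i$, the point that needs care is precisely this last one: the copyability of the link-cut tree is genuinely used, since rebuilding $F_i$'s data structure from scratch would cost $\Omega(|V(F_i)|)$, which can far exceed $|E_{H_{i-1}}|$.
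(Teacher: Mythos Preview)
Your proposal is correct and follows essentially the same approach as the paper: both compute $B_i$ via an augmented link-cut tree on $F_{i-1}$ (the paper defers details to \cite{decrflow}, while you spell out the virtual-tree/Euler-tour-plus-LCA construction explicitly), build $\hat{A}_i$ explicitly from this, compute $\hat{F}_i$ and $\hat{E}_i$ by standard static methods and \Cref{lma:clusteringFrederikson}, and then obtain $F_i$ by copying the persistent link-cut tree of $F_{i-1}$, cutting at $B_i$, and linking the pre-images of $\hat{F}_i\setminus\hat{E}_i$. Your write-up is in fact more detailed than the paper's, and correctly isolates the two nontrivial points (computing $B_i$ without traversing long $F_{i-1}$-paths, and using persistence to avoid rebuilding $F_i$ from scratch).
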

\begin{proof}
Let us follow the insights used in \Cref{clm:correctnessKappaShattering}. We can find the set $B_i$ of branching vertices in time $O(|B_i| \log n)$ by slightly augmenting the link-cut tree data structure on $F_{i-1}$ (for details, see \cite{decrflow}). Thus, we can construct the graph $\hat{A}_i$ explicitly in time $\tilde{O}(|E_{H_{i-1}}||)$. Computing a maximal spanning forest of $\hat{F}_i$ statically takes time $\tilde{O}(|E_{H_{i-1}}|)$ as does the procedure from \Cref{lma:clusteringFrederikson} to compute edge set $\hat{E}_i$. We can then construct $F_i$ in time $\tilde{O}(|E_{H_{i-1}}|)$ by making a copy of $F_{i-1}$ (using persistence), removing all edges incident to vertices in $B_i$, and adding the pre-images of all edges in $\hat{F}_{i} \setminus \hat{E}_i$. 
\end{proof}

\subsection{Implementing the Connectivity Sparsifier \texorpdfstring{$\hat{H}_i$}{TEXT} of \texorpdfstring{$\hat{G}_i$}{TEXT}}

In this section, we describe how to maintain connectivity sparsifiers $\hat{H}_i$ of $\hat{G}_i$ with the required guarantees.

\paragraph{Expanders and Expander Decompositions.} Expanders have been crucial in many previous connectivity data structures \cite{patrascu2007planning, NS17, W17, NSW17} and we also heavily employ them in our construction. We state a seminal result in the area.

\begin{definition}[Expander Decomposition]
For a graph $G$ and some set $A \subseteq V$, we let $\vol_G(A) = \sum_{a \in A} \deg_G(a)$ denote the volume of $A$. We then call $G$ a $\phi$-expander if $\frac{|E(A, V \setminus A)|}{\vol_G(A)} \geq \phi$ for every set $A \supset \emptyset$ such that $\vol_G(A) \leq \vol_G(V)/2$.
 
We say a partition $X_1, \ldots, X_{\eta}$ of the vertex set $V$ is a $\phi$-expander decomposition of quality $q$ if
    \begin{enumerate}
        \item $|\{(u,v) \in E(X_i, X_j): i \neq j\}| \leq q \cdot \phi \cdot m$ and
        \item $G[X_i]$ is a $\phi$-expander for all $1 \leq i \leq \eta$.
    \end{enumerate}
\end{definition}

\begin{theorem}[Expander Decomposition, See Theorem 1.2 in \cite{SW19}] \label{thm:expander_decomp}
    Given a graph $G = (V, E)$ with $m$ edges and $n$ vertices, and a parameter $\phi$, there is a randomized algorithm that computes a $\phi$-expander decomposition of quality $\gamma_{exp} = O(\log^3 m)$ in time $O\left(\frac{m \log^4 m}{\phi}\right)$. The algorithm succeeds w.h.p. 
\end{theorem}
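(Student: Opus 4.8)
The plan is to follow the recursive \emph{cut-or-certify} strategy underlying \cite{SW19}. Maintain a collection of vertex-disjoint subgraphs, initially just $G$ itself, and on each one run a subroutine that, in time $\tilde O(m'/\phi)$ on an $m'$-edge subgraph, either (i) certifies that the subgraph is a $\phi$-expander — in which case it is finalized as one of the output clusters $X_\ell$ — or (ii) returns a cut $(A,\bar A)$ of conductance $O(\phi\log^2 m)$, in which case we delete its crossing edges and recurse on both $G[A]$ and $G[\bar A]$. For the quality bound, charge the $O(\phi\log^2 m)\cdot\min\{\vol(A),\vol(\bar A)\}$ edges cut at a recursive step to the volume on the smaller side; since that volume at least halves along every root-to-leaf chain of recursive calls, the recursion tree has depth $O(\log m)$, each edge is charged $O(\log m)$ times, and the total number of inter-cluster edges is $O(\phi m\log^3 m)$, i.e.\ quality $q=O(\log^3 m)$. (Very unbalanced sparse cuts need a separate treatment; see below.)

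The subroutine realizing (i)/(ii) is the single-commodity cut-matching game of Khandekar--Rao--Vazirani: over $O(\log^2 m)$ rounds a (randomized) cut player proposes a bisection and a matching player answers by routing a near-perfect matching of unit demands across it, where each round is a single blocking-flow (single-commodity max-flow) computation on a capacitated instance and runs in $\tilde O(m'/\phi)$ time. After $O(\log^2 m)$ rounds the accumulated flows either embed a $1/\mathrm{polylog}(m)$-expander on almost all of the subgraph's volume into the subgraph at congestion $\tilde O(1/\phi)$ — certifying a \emph{near-expander}, which after a cleanup step becomes a genuine $\phi$-expander — or else one of the proposed bisections is revealed to be an $O(\phi)$-sparse cut, which we return. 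Multiplying the per-node cost by the number of recursion nodes at each level (total $O(m)$ edges per level) and by the $O(\log m)$ levels, and accounting for the logarithmic factors from the cut-matching game and the blocking-flow subroutine, gives the stated $O(m\log^4 m/\phi)$ running time; the high-probability guarantee is inherited from the cut player.

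The technically most delicate point — and what I expect to be the main obstacle — is the cleanup, i.e.\ the \emph{trimming} (pruning) step that converts a near-expander into an expander with respect to its own degrees. A near-expander certificate only guarantees expansion for cuts whose small side sits mostly inside a large ``core''; vertices outside the core may still form a sparse cut. Trimming removes a small set $A'$ so that $G[V\setminus A']$ is a true $\phi$-expander, and one must establish \emph{simultaneously} that $\vol(A')=O(\phi^{-1}\cdot(\text{volume of the vertices violating expansion}))$, so that recursing on $A'$ does not break the depth bound, and that trimming creates only $O(\phi\cdot\vol(A'))$ new boundary edges, so the charging argument for the quality still goes through. This is done with one more carefully constructed max-flow computation — route the excess out of the would-be-violating vertices into the core — and is precisely the trimming/pruning lemma of \cite{SW19}. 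Assembling the three cases (certified expander / balanced sparse cut / trim-and-recurse) with the recursion and charging bookkeeping above yields \Cref{thm:expander_decomp}.
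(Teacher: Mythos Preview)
The paper does not prove this theorem at all: it is stated as a black-box citation to Theorem~1.2 of \cite{SW19} and used as an off-the-shelf tool. So there is no ``paper's own proof'' to compare against; your proposal is effectively a sketch of the Saranurak--Wang argument itself, and as such it is broadly faithful to their approach (recursive cut-or-certify via a cut-matching game, with trimming to upgrade near-expanders to true expanders, and a depth-$O(\log m)$ charging argument for the inter-cluster edge count). For the purposes of this paper nothing more than the citation is required.
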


It is well-known that expanders are relatively robust to updates, meaning that most of the expander remains well-connected, and efficient algorithms exist to uncover these well-connected parts by \emph{pruning} small parts that are potentially poorly connected. Here, we use a very recent result that achieves $\tilde{O}(1)$ \emph{worst-case} update times and recourse.

\begin{theorem}[Worst-Case Pruning, See Theorem 1.3 in \cite{meierhans2025expanderpruningpolylogarithmicworstcase}] \label{thm:worst_case_pruning}
    Given a $\phi$-expander $G = (V, E)$ with $n$ vertices and $m$ edges undergoing up to $\frac{\phi m}{\gamma_{\textit{prune}}}$ edge deletions for $\gamma_{\textit{prune}} = \tilde{O}(1)$, there is an algorithm that explicitly maintains a set $A$ such that $G[V \setminus A]$ remains a $\frac{\phi}{\gamma_{\textit{prune}}}$- expander. Every deletions/vertex split is processed in time $\gamma_{\textit{prune}}/\phi^2$ and the set $A$ grows by at most $\gamma_{\textit{prune}}/\phi^2$ vertices after each update. 
\end{theorem}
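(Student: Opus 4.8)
The statement is quoted from \cite{meierhans2025expanderpruningpolylogarithmicworstcase} and will presumably be used as a black box; nonetheless, here is the route I would take to prove it. The plan is to build on the local-flow (unit-flow / push-relabel) framework of \cite{SW19}, which already yields the \emph{amortized} version of this statement, and to de-amortize it by letting the flow computation proceed incrementally across updates while performing only a bounded amount of work per step.

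Concretely, I would treat each vertex $v$ as having sink capacity $\deg_G(v)$ and maintain a source vector $\Delta$: whenever an edge $(u,v)$ is deleted (a vertex split is modeled by first deleting the separating edges, then handled identically), add $\Theta(1/\phi)$ units of source mass to each of $u$ and $v$. One then runs a local push-relabel computation that tries to route all source mass to sink capacity inside the current graph $G[V\setminus A]$ with congestion $O(1/\phi)$. Wherever the flow gets stuck, the standard level-cut argument of the unit-flow analysis exposes a set $P$ of small volume whose conductance in $G[V\setminus A]$ is below $\phi/\gamma_{\text{prune}}$, and we add $P$ to $A$; conversely, a successfully routed flow certifies that no such sparse cut exists, so $G[V\setminus A]$ remains a $\phi/\gamma_{\text{prune}}$-expander. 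Since each deletion injects only $\Theta(1/\phi)$ new mass and the expander property guarantees it can be absorbed within $O(\log n/\phi)$ hops, the total flow work and the total pruned volume charged to a single deletion are $\tilde{O}(1/\phi^2) = \gamma_{\text{prune}}/\phi^2$; summing over the at most $\phi m/\gamma_{\text{prune}}$ permitted deletions keeps the accumulated pruned volume below the fraction of $\vol_G(V)$ that the expander property can tolerate, so the routing never has to abandon more than a constant fraction of the volume.

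The substantive difficulty, and the step I expect to be the main obstacle, is the de-amortization: the vanilla \cite{SW19} push-relabel has bursty cost, since one deletion can trigger a long cascade of pushes, so one must maintain the flow and vertex labels \emph{persistently across updates} and spend only $O(\gamma_{\text{prune}}/\phi^2)$ elementary operations per step, buffering the rest. The delicate point is then to prove that the invariant ``$G[V\setminus A]$ is a $\phi/\gamma_{\text{prune}}$-expander'' holds at \emph{every} intermediate step, not merely when the flow has converged; this is why one works with the relaxed conductance $\phi/\gamma_{\text{prune}}$ rather than $\phi/6$, so that a partially routed flow already rules out any cut that is too sparse, and it is also where the per-step growth bound on $A$ must be argued carefully so that pruning does not lag behind the rate at which new source mass arrives. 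A secondary technical point is handling vertex splits cleanly within the same bookkeeping, since a split can redistribute a large amount of degree in a single step and the source/label data structures must be updated consistently without violating the worst-case time bound.
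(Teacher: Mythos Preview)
The paper does not prove this statement: it is quoted verbatim as Theorem 1.3 of \cite{meierhans2025expanderpruningpolylogarithmicworstcase} and used as a black box, exactly as you anticipated in your first sentence. There is therefore no in-paper proof to compare your sketch against.

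For what it is worth, your outline is a faithful high-level description of how the cited work proceeds: it does build on the unit-flow/push-relabel machinery of \cite{SW19} and the core contribution is indeed the de-amortization, ensuring the expander invariant holds after every individual update rather than only in aggregate. You correctly flag the two genuine technical obstacles (persistent flow state with bounded work per step, and certifying the relaxed expansion $\phi/\gamma_{\text{prune}}$ at every intermediate moment). The actual argument in \cite{meierhans2025expanderpruningpolylogarithmicworstcase} is more involved than your sketch suggests---buffering alone does not suffice, since one must also guarantee that the partially-processed flow never leaves a sparse cut unexposed---but as a proof \emph{plan} your proposal points in the right direction.
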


\paragraph{The Connectivity Sparsifier.} Equipped with these tools we are now ready to describe the algorithm to maintain $\hat{H}_i$:
\begin{itemize}
    \item \underline{If $t$ is divisible by $x_i$:} We re-compute $\hat{H}_i$ as follows. We first compute a $\phi_{\textit{sparse}}$-expander decomposition $X_1, X_2, \ldots, X_{\eta}$ of the current graph $\hat{G}_i$ for  $\phi_{\textit{sparse}} = 1/(20 \gamma_{\textit{exp}}) = \tilde{\Theta}(1)$ where $\gamma_{\textit{exp}}$ is the quality parameter in \Cref{thm:expander_decomp}. It then initializes a data structure $\mathcal{P}_j$ on each $\phi_{\textit{sparse}}$-expander $\hat{D}_j = \hat{G}_i[X_j]$. 

    Finally, we compute a spanning tree $\hat{T}_j$ for each expander $\hat{D}_j$ and re-initialize $\hat{H}_i$ as the union of all trees $\hat{T}_j$ and the set of edges crossing between expanders, i.e. the set $\{(u,v) \in E_{\hat{G}_i}(X_j, X_k) \;|\; j \neq k\}$.\footnote{If the graph $\hat{D}_j$ is disconnected, we restart the entire data structure. This only happens when the expander decomposition algorithm from \Cref{thm:expander_decomp} fails. }

    \item \underline{Otherwise:} As already mentioned previously, we will show that at every time step $t$, $H_{i-1} \cup F_i$ undergoes at most one edge deletion. 

    Let us now describe how to update $\hat{H}_i$ such that it is again a connectivity sparsifier of $\hat{G}_i$. We first handle the edge deletion to $H_{i-1} \cup F_i$ (if it occurs):  
    \begin{itemize}
        \item Let $e = (u,v)$ be the edge that is deleted. 
        For each $\hat{z} \in \{\hat{u}^{\textit{OLD}}, \hat{v}^{\textit{OLD}}\}$ (for $e \in F_i$, $\hat{u}^{\textit{OLD}} = \hat{v}^{\textit{OLD}}$), we do the following: we first remove all edges incident to $\hat{z}$ from $\hat{H}_i$.

        If there is a graph $\hat{D}_j$ as described above that contains $\hat{z}$, then we also remove all incident edges to $\hat{z}$ from $\hat{D}_j$ and inform the pruning data structure $\mathcal{P}_j$. If no such graph $\hat{D}_j$ exists, we do nothing in this step. Otherwise, if the data structure $\mathcal{P}_j$ received the maximum number of updates already, we add $\hat{D}_j$ to $\hat{H}_i$ and then set $\hat{D}_j$ to be the empty graph. 
        
        Otherwise, let $\Delta A$ be the set of vertices outputted by $\mathcal{P}_j$ due to the updates caused. Clearly, $\hat{z} \in \Delta A$. Add all edges incident to $\Delta A$ in $\hat{D}_j$ to $\hat{H}_i$ (if $e \in F_i$, replace $\hat{u} = \hat{v}$ by the new vertices $\hat{u}$ and $\hat{v}$ resulting from the vertex split and move the edges incident to $\hat{u}$ to the right endpoint). Additionally, add $2 \cdot \gamma_{prune} \cdot \kappa/\phi_{\textit{sparse}}$ edges from $\hat{D}_j \setminus \hat{H}_i$ to $\hat{H}_i$ (as long as such edges exist).  
        
        Then, induce the tree $\hat{T}_j$ and the graph $\hat{D}_j$ on vertex set $V_{\hat{D}_j} \setminus \Delta A$. After inducing, the tree $\hat{T}_j$ is potentially a forest. To remedy this, while there are components $C_1, C_2, \ldots, C_\nu$ for $\nu \geq 2$, take the smallest such connected component, say $C_1$, and sample uniformly at random an edge incident to $C_1$ in $\hat{D}_j$ until an edge $\hat{f}$ is found that leaves $C_1$. Add the edge $\hat{f}$ to $\hat{T}_j$ and $\hat{H}_j$ and repeat the process.\footnote{If re-connecting takes time $\gg O(\log n)  \cdot \gamma_{\textit{prune}}/\phi_{\textit{sparse}}$, we re-start the whole data structure. We show that this only happens with probability $1/n^C$ for some large constant $C$.}

        Finally, add the new projections $\hat{u}^{\textit{NEW}}, \hat{v}^{\textit{NEW}}$ back into $\hat{H}_i$ together with their incident edges (now potentially with different endpoints). 
    \end{itemize}

    Then, we handle the set of edge insertions to $H_{i-1} \cup F_i$:
    \begin{itemize}
        \item For every edge $e$ inserted, if $\hat{e} = \bot$, we do nothing; otherwise we add $\hat{e}$ to $\hat{H}_i$.
    \end{itemize}
\end{itemize}

\paragraph{Analysis.} Again, let us analyze the procedures.

\begin{claim}
At all times $\hat{H}_i$ remains a connectivity sparsifier of $\hat{G}_i$. 
\end{claim}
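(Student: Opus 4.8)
The plan is to carry along, as an auxiliary structural invariant, a precise description of $\hat{H}_i$ in terms of $\hat{G}_i$ and the collection of currently active pruned expanders $\hat{D}_j$, and then argue by induction on the time step $t$. Concretely, I will maintain that at all times: (i) $\hat{H}_i \subseteq \hat{G}_i$ and $V_{\hat{H}_i} = V_{\hat{G}_i}$; (ii) for each active $\hat{D}_j$, the graph $\hat{H}_i$ contains a spanning tree $\hat{T}_j$ of the current (pruned) $\hat{D}_j$; and (iii) every edge of $\hat{G}_i$ that is not internal to some active $\hat{D}_j$ lies in $\hat{H}_i$.

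The first step is the reduction: (i)--(iii) imply the claim. Contracting each $\hat{T}_j$ in both $\hat{H}_i$ and $\hat{G}_i$ collapses exactly the vertex sets $V(\hat{D}_j)$, and after this contraction the two graphs have identical edge sets by (iii); hence $\hat{u},\hat{v}$ are connected in $\hat{H}_i$ iff they are connected in $\hat{G}_i$ — connectivity inside each $\hat{D}_j$ is certified by $\hat{T}_j\subseteq\hat{H}_i$ by (ii), and $\hat{H}_i\subseteq\hat{G}_i$ gives the converse direction. Extra edges that the algorithm may add to $\hat{H}_i$ (the $2\gamma_{prune}\kappa/\phi_{\textit{sparse}}$ surplus edges) do not hurt, since they keep $\hat{H}_i\subseteq\hat{G}_i$.

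The second step is the induction. Base case: when $t$ is divisible by $x_i$, re-initialization sets $\hat{H}_i$ to the union of spanning trees $\hat{T}_j$ of the expanders $\hat{D}_j=\hat{G}_i[X_j]$ together with all inter-cluster edges, which is precisely (i)--(iii) (each $\hat{D}_j$ being a $\phi_{\textit{sparse}}$-expander is connected, else we restart). Inductive step: I go through the operations in the order the algorithm performs them. Edge insertions add $\hat{e}$ to both $\hat{G}_i$ and $\hat{H}_i$ and leave every $\hat{D}_j$ untouched, so (i)--(iii) persist. For the (at most one) edge deletion to $H_{i-1}\cup F_i$ — which induces in $\hat{G}_i$ either a plain deletion of $\hat{e}$ or a vertex split of $\hat{u}^{\textit{OLD}}=\hat{v}^{\textit{OLD}}$ — I analyse the processing of each affected vertex $\hat{z}\in\{\hat{u}^{\textit{OLD}},\hat{v}^{\textit{OLD}}\}$: if $\hat{z}$ lies in no active $\hat{D}_j$, its incident edges were in $\hat{H}_i$ by (iii) and are simply removed; if $\mathcal{P}_j$ has exhausted its update budget, all of $\hat{D}_j$ is merged into $\hat{H}_i$ and $\hat{D}_j$ becomes inactive, trivially re-establishing (iii); otherwise, by \Cref{thm:worst_case_pruning} the pruned graph $\hat{D}_j[V_{\hat{D}_j}\setminus\Delta A]$ is still an expander (hence connected), all edges incident to $\Delta A$ in $\hat{D}_j$ are added to $\hat{H}_i$ so that the pruned-out vertices comply with (iii) as non-expander vertices, and the re-connection loop turns the induced forest $\hat{T}_j$ back into a spanning tree of the pruned $\hat{D}_j$, re-establishing (ii), with each sampled re-connection edge also inserted into $\hat{H}_i$. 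Finally, re-inserting $\hat{u}^{\textit{NEW}},\hat{v}^{\textit{NEW}}$ with all of their $\hat{G}_i$-incident edges brings these (non-expander) vertices into compliance with (iii) and preserves (i).

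The main obstacle is the correctness of the re-connection loop in the pruning case: one must show that inducing $\hat{T}_j$ on $V_{\hat{D}_j}\setminus\Delta A$ and then repeatedly sampling an edge leaving the smallest component of the resulting forest terminates with a spanning tree of the pruned $\hat{D}_j$. Termination and the spanning property follow from connectivity of the pruned expander (\Cref{thm:worst_case_pruning}): as long as the forest has $\geq 2$ components, $\hat{D}_j$ still contains an edge leaving the smallest component. The running-time claim — only $\tilde{O}(1)$ samples per re-connection w.h.p. — uses the expansion of $\hat{D}_j$ to lower-bound the fraction of edges incident to the smallest component that leave it (since that component has volume at most half of $\hat D_j$), and is exactly the event the restart footnote guards against; hence \emph{conditioned on no restart}, connectivity correctness is deterministic. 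The remaining bookkeeping — that $V_{\hat{H}_i}=V_{\hat{G}_i}$ is preserved through vertex splits, and that the case $e\in F_i$ with $\hat{u}^{\textit{OLD}}=\hat{v}^{\textit{OLD}}$ is handled identically by the same argument — is routine.
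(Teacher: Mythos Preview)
Your proposal is correct and takes essentially the same approach as the paper: both argue that $\hat{H}_i$ consists, at all times, of a spanning tree of each currently active pruned expander $\hat{D}_j$ together with all edges of $\hat{G}_i$ not internal to any such $\hat{D}_j$, and that this structural description immediately yields the connectivity-sparsifier property. The paper states this invariant only informally at the end of its proof, whereas you formulate it explicitly as (i)--(iii) up front and verify it by a cleaner case-by-case induction over the update types; this is a presentational improvement but not a different argument.
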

\begin{proof}
For the re-initialization procedure, correctness follows immediately since spanning trees are clearly connectivity sparsifiers of connected graphs.  

In our update procedure, we have that insertions to $H_{i-1} \cup F_i$ are added to $\hat{H}_i$ if they are added to $\hat{G}_i$, and thus connectivity is preserved.

Let us now analyze the deletion of an edge $(u,v) \in H_{i-1} \cup F_i$: we first delete the images of endpoints $\hat{u}^{\textit{OLD}}, \hat{v}^{\textit{OLD}}$ with all their incident edges from $\hat{H}_i$. Then, we prune some vertices from graphs $\hat{D}_j$, add the edges incident to the pruned vertex set to $\hat{H}_i$ and then add the (fixed) tree $\hat{T}_j$ to $\hat{H}_i$ for each affected graph. 

But this implies that $\hat{H}_i$ consists of spanning trees of connected components formed by each $\hat{D}_j$ and all remaining edges in $\hat{G}_i$ by the end of the update. Thus, clearly, $\hat{H}_i$ is a connectivity sparsifier of $\hat{H}_i$ at any time.
\end{proof}

\begin{claim} \label{clm:size}
At each time $t$ divisible by $x_i$, graph $H_i$ consists of at most $\frac{108}{\kappa} \cdot |E_{H_{i - 1}}| + 2 \gamma_{\textit{exp}} \cdot \phi_{\textit{sparse}} \cdot |E_{H_{i-1}}|$ edges. At each time $t$ not divisible by $x_i$, $H_i$ undergoes at most one edge deletion and at most $4i \cdot \kappa^2 \cdot \gamma_{\textit{prune}}/\phi_{\textit{sparse}}^2$ edge insertions.
\end{claim}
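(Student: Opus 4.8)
The plan is to prove the two assertions separately: the size bound at a re-initialization step by a direct count, and the per-step recourse bound by induction on the layer $i$.

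\textbf{Size at re-initialization.} By construction, at a step $t$ divisible by $x_i$ the graph $\hat{H}_i$ is the disjoint union of the spanning trees $\hat{T}_j$ of the expanders $\hat{D}_j=\hat{G}_i[X_j]$ together with the set of inter-cluster edges of the expander decomposition. The number of inter-cluster edges is at most $\gamma_{\textit{exp}}\cdot\phi_{\textit{sparse}}\cdot|E_{\hat{G}_i}|$ by the quality guarantee of \Cref{thm:expander_decomp}; since every edge of $F_i$ is a self-loop of $\hat{G}_i=\mathcal{C}(F_i\cup H_{i-1},F_i)$ we have $|E_{\hat{G}_i}|\le|E_{H_{i-1}}|$, and accounting for the ordered-pair convention in the quality bound this is at most $2\gamma_{\textit{exp}}\cdot\phi_{\textit{sparse}}\cdot|E_{H_{i-1}}|$. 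The number of tree edges is at most the number of non-isolated super-vertices of $\hat{G}_i$, which equals the number of non-trivial tree components of $F_i$ (components $C$ of $G$ with $F_i[C]$ a tree contract to isolated vertices and contribute no edge of $\hat{H}_i$); by \Cref{clm:correctnessKappaShattering} this is at most $\frac{108}{\kappa}|E_{H_{i-1}}|$. Summing the two bounds and using $|E_{H_i}|=|E_{\hat{H}_i}|$ gives the first claim.

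\textbf{Per-step recourse.} I would prove by induction on $i$ the strengthened statement: at each step $t$ not divisible by $x_i$, layer $i-1$ together with the forest update causes at most one edge to be removed (and never re-added) from $H_{i-1}\cup F_i$ — namely the edge deleted from $G$ at time $t$ — and causes at most $4(i-1)\kappa^2\gamma_{\textit{prune}}/\phi_{\textit{sparse}}^2$ insertions into $H_{i-1}\cup F_i$. The base case $i=1$ is immediate as $H_0=G$ and $F_1\subseteq G$. The key point for the deletion count is that in the maintenance of $\hat H_{i-1}$ every edge removed during "delete all edges incident to $\hat z$" is re-added when the new projections $\hat u^{\textit{NEW}},\hat v^{\textit{NEW}}$ are inserted back (the same edge of $\hat G_{i-1}$, only with relabeled endpoints after a vertex split), and all other modifications of $\hat H_{i-1}$ — the edges incident to $\Delta A$, the $2\gamma_{\textit{prune}}\kappa/\phi_{\textit{sparse}}$ explicit edges, and the re-connection edges — are pure insertions; hence the only genuine loss in $H_{i-1}$ is the pre-image of the image of the edge deleted from $H_{i-2}\cup F_{i-1}$, which unwinds to the edge deleted from $G$, and this is also the only edge $F_i$ loses.

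Assuming the inductive hypothesis, processing layer $i$ at a non-rebuild step splits into handling the single deletion to $H_{i-1}\cup F_i$ and handling the forwarded insertions. For the deletion, $\hat u^{\textit{OLD}},\hat v^{\textit{OLD}}$ have degree at most $\kappa$ in $\hat G_i$ (by $\kappa$-shattering of $F_i$ w.r.t. $H_{i-1}$), so the affected expanders $\hat D_j$ receive at most $2\kappa$ deletions/vertex-splits; by \Cref{thm:worst_case_pruning} the pruned set $\Delta A$ grows by at most $2\kappa\gamma_{\textit{prune}}/\phi_{\textit{sparse}}^2$ vertices, contributing at most $\kappa$ times that many incident edges to $\hat H_i$; the re-connection loop adds at most one edge per new component of $\hat T_j$, and removing $|\Delta A|$ vertices from the max-degree-$\kappa$ forest $\hat T_j$ creates at most $\kappa|\Delta A|$ new components (each needed edge exists because $\hat D_j\setminus\Delta A$ remains a $\frac{\phi_{\textit{sparse}}}{\gamma_{\textit{prune}}}$-expander, so any component of volume at most half its volume has a constant fraction of boundary edges); the $2\gamma_{\textit{prune}}\kappa/\phi_{\textit{sparse}}$ explicit edges are absorbed into the same order. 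Collecting terms bounds the insertions caused by the deletion by $4\kappa^2\gamma_{\textit{prune}}/\phi_{\textit{sparse}}^2$, and the sole deletion to $H_i$ is the image $\hat e$ (or none if $\hat e=\bot$). Each forwarded insertion triggers at most one insertion of its image into $\hat H_i$ and no deletion, adding $4(i-1)\kappa^2\gamma_{\textit{prune}}/\phi_{\textit{sparse}}^2$, for a total of $4i\kappa^2\gamma_{\textit{prune}}/\phi_{\textit{sparse}}^2$ insertions and one deletion, as claimed. I expect the main obstacle to be the removal-neutrality argument underpinning the strengthened hypothesis: verifying that the vertex-split simulation never drops an edge net, so that the unique $G$-deletion is the only truly removed edge at every layer — this is exactly what prevents the deletion count from compounding, whereas the insertion count legitimately accumulates one additive $4\kappa^2\gamma_{\textit{prune}}/\phi_{\textit{sparse}}^2$ term per layer, which is the source of the factor $i$.
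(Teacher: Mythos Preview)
Your proof is correct and follows essentially the same route as the paper, with one noteworthy simplification you missed. For the ``at most one deletion'' part you set up an induction and argue that the vertex-split simulation is removal-neutral at every layer (you even flag this as the main obstacle). The paper bypasses this entirely with the one-line observation that $H_i\subseteq G$ at all times, so the only edge that can possibly leave $H_i$ at step $t$ is the edge deleted from $G$; no induction or tracking of re-inserted edges is needed. Your inductive argument is valid, just heavier than necessary.

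For the insertion count the two proofs coincide: both bound the contribution of each layer $j\le i$ by $O(\kappa^2\gamma_{\textit{prune}}/\phi_{\textit{sparse}}^2)$ (pruned-set incident edges $+$ explicit extra edges $+$ tree-reconnection edges) and accumulate over the $i$ layers to get the factor $i$. The paper phrases this as ``$H_i$ is updated by adding $\bigcup_{j\le i}\Delta E_j$'' and bounds each $|\Delta E_j|$; you phrase it as an inductive forwarding of $4(i-1)\kappa^2\gamma_{\textit{prune}}/\phi_{\textit{sparse}}^2$ insertions plus one fresh batch---the arithmetic is the same.

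One small imprecision: you justify the degree bound on $\hat u^{\textit{OLD}},\hat v^{\textit{OLD}}$ by $\kappa$-shattering ``in $\hat G_i$'', but $\hat G_i$ changes after re-initialization as $H_{i-1}$ receives insertions. What you actually need (and what the paper uses) is the degree bound inside the frozen expander $\hat D_j$, which is a subgraph of $\hat G_i$ at re-initialization time and only shrinks thereafter; there the $\kappa$-bound indeed persists. This does not affect the validity of your argument.
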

\begin{proof}
    At each time $t$ divisible by $x_i$, layer $i$ is rebuilt. Let us first bound the number of edges it contains after a rebuild. 
    
    By \Cref{clm:correctnessKappaShattering}, we have $|E_{\hat{A}_i}| \leq 5 \cdot |E_{H_{i - 1}}|$. We directly obtain $|E_{\hat{G}_i}| \leq |E_{\hat{A}_i}| \leq 5 \cdot |E_{H_{i - 1}}|$. We first bound the number of edges between expander components $X_j$ in the expander decomposition. By \Cref{thm:expander_decomp}, the total number of such edges is bounded by $\phi_{\textit{sparse}} |E_{\hat{G}_i}| \leq 5 \cdot \gamma_{\textit{exp}} \cdot \phi_{\textit{sparse}} \cdot |E_{H_{i - 1}}|$. All other edges in $\hat{H}_i$ are part of spanning trees $\hat{T}_j$ of expander components $X_j$. We notice that singleton compoents do not contribute any edges. Therefore, the total number of such edges is at most $\frac{108}{\kappa}|E_{H_{i - 1}}|$. This concludes the proof of the first part of the claim. 
    
    If the update at time $t$ is an insertion, then $H_i$ undergoes at most one insertion and no deletions. Therefore, we focus on edge deletions. Since $H_i \subseteq G$, a deletion to $G$ clearly only causes one deletion to $H_i$. 

    Recall that after every deletion, each layer $j \leq i$ outputs a set of edges $\Delta \hat{E}_j$, and $H_i$ is updated by adding the edge set $\bigcup_{j \leq i} \Delta E_j$. We first bound the size of the set $\Delta \hat{E}_j$ for every $j$. If both mapped endpoints of the edge are not contained in some graph $\hat{D}_k$ at this layer, the set is empty. Otherwise, let us assume that the data structure $\mathcal{P}_k$ has not yet processed the maximum number of updates. In that case simulating the edge deletion/vertex split on $\hat{D}_k$ takes at most $\kappa/3$ edge deletions to $\mathcal{P}_k$. Each of these will cause up to $\kappa \gamma_{\textit{prune}}/\phi_{\textit{sparse}}^2$ edges to enter the set $\Delta E_j$ by \Cref{thm:worst_case_pruning}. Finally another $\kappa 2\gamma_{\textit{prune}}/\phi_{\textit{sparse}}$ arbitrary edges are added. Afterwards, the tree $\hat{T}_k$ is repaired. Since there are at most $\kappa \gamma_{\textit{prune}}/\phi_{\textit{sparse}}^2$ edge deletions to $\hat{T}_k$, this adds at most $\kappa \gamma_{\textit{prune}}/\phi_{\textit{sparse}}^2$ extra edges. 
    
    Finally, we observe that whenever $\mathcal{P}_k$ processed the maximum amount of deletions, the extra edges added in each previous step ensure that every edge incident to $D_k$ is already in the graph $H_k$. 

    In this proof, we ignored failures. Whenever such a failure occurs with probability $1/n^C$ for some large constant $C$, we re-start the whole algorithm and therefore $t$ is re-set to $0$. 
\end{proof}

\begin{claim} \label{clm:rebuild}
The re-initialization of $\hat{H}_i$ and $H_i$ can be implemented to run in time $\tilde{O}(|E_{H_{i-1}}|)$ with high probability. For every other update, the algorithm requires expected time $\tilde{O}(1)$ to update $\hat{H}_i$ and $H_i$.
\end{claim}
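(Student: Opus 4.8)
By \Cref{clm:forsest_rt} and \Cref{clm:correctnessKappaShattering} the core graph $\hat{G}_i = \mathcal{C}(F_i\cup H_{i-1},F_i)$ has at most $5\cdot|E_{H_{i-1}}|$ edges and is built explicitly in time $\tilde{O}(|E_{H_{i-1}}|)$; the plan is to show every remaining step of the rebuild is also $\tilde{O}(|E_{H_{i-1}}|)$. We run the expander decomposition of \Cref{thm:expander_decomp} on $\hat{G}_i$ with $\phi_{\textit{sparse}} = \tilde{\Theta}(1)$, which takes $O(|E_{\hat{G}_i}|\log^4 m/\phi_{\textit{sparse}}) = \tilde{O}(|E_{H_{i-1}}|)$ time deterministically and succeeds w.h.p.; we initialize one pruning structure $\mathcal{P}_j$ of \Cref{thm:worst_case_pruning} per expander $\hat{D}_j = \hat{G}_i[X_j]$ in total time near-linear in $\sum_j |E_{\hat{D}_j}| \le |E_{\hat{G}_i}|$; we compute each spanning tree $\hat{T}_j$ by BFS; and we assemble $\hat{H}_i$ from the trees $\hat{T}_j$ and the inter-cluster edges and read off $H_i$ as its pre-image. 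The only randomized ingredient is the decomposition: if it fails (e.g., some $\hat{D}_j$ is disconnected, which we check) we restart the entire structure, an event of probability $n^{-C}$, so w.h.p. the rebuild runs in $\tilde{O}(|E_{H_{i-1}}|)$ time.

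\textbf{A non-rebuild update, deterministic part.} The structural fact we use is that every vertex of $\hat{G}_i$ has degree at most $\kappa = \tilde{O}(1)$: a super-vertex of $\hat{G}_i$ is the contraction of a tree $T$ of the $\kappa$-shattering forest $F_i$, and since $F_i$ is a forest no $F_i$-edge leaves $V(T)$, so the degree equals the number of $H_{i-1}$-edges leaving $V(T)$, which is at most $\vol_{H_{i-1}}(V(T))\le\kappa$. By \Cref{clm:size} (applied at level $i-1$, together with the fact that $F_i$ gains no edges between rebuilds), $H_{i-1}\cup F_i$ sees at most one deletion and $\tilde{O}(1)$ insertions per update to $G$; each insertion adds $O(1)$ edges to $\hat{G}_i$ and $\hat{H}_i$ (and $O(1)$ to $H_i$), and for the single deletion we recompute the at-most-two affected super-vertices' images via the link--cut tree on $F_i$, redistribute their $\le\kappa$ incident edges across the vertex split in $O(\kappa\log n)$ time, push $\le\kappa/3$ edge deletions into the relevant $\mathcal{P}_j$ at cost $\tilde{O}(1)$ each by \Cref{thm:worst_case_pruning}, and add the edges incident to the returned set $\Delta A$ (of size $\tilde{O}(1)$) plus the $O(\kappa\gamma_{\textit{prune}}/\phi_{\textit{sparse}})$ extra edges to $\hat{H}_i$ (each change mirrored in $H_i$ in $O(1)$). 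When $\mathcal{P}_j$ is saturated, ``adding $\hat{D}_j$ to $\hat{H}_i$'' is a no-op since, by the counting in \Cref{clm:size}, all of $\hat{D}_j$'s edges already lie in $\hat{H}_i$, so this branch costs $O(1)$. Hence everything except the repair of $\hat{T}_j$ is deterministic $\tilde{O}(1)$.

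\textbf{The repair step and the main obstacle.} Inducing $\hat{T}_j$ on $V_{\hat{D}_j}\setminus\Delta A$ removes $\tilde{O}(1)$ vertices of tree-degree $\le\kappa$, so $\hat{T}_j$ falls into $\nu = \tilde{O}(1)$ components; since each reconnection merges two components, it suffices to bound the expected cost of one reconnection. The current pruned graph $\hat{D}_j[V_{\hat{D}_j}\setminus\Delta A]$ is a $\phi_{\textit{sparse}}/\gamma_{\textit{prune}}$-expander (\Cref{thm:worst_case_pruning}); as $C_1$ is the smallest of $\nu\ge2$ components we have $\vol(C_1)\le\frac{1}{2}\vol(V_{\hat{D}_j}\setminus\Delta A)$, so at least a $\phi_{\textit{sparse}}/\gamma_{\textit{prune}} = \tilde{\Omega}(1)$ fraction of the at-most-$\vol(C_1)$ edges incident to $C_1$ leave it, and a uniformly random incident edge succeeds with probability $\tilde{\Omega}(1)$; thus $\tilde{O}(1)$ samples suffice in expectation per reconnection. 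To take one sample in $O(\log n)$ time we maintain each tree component in a balanced structure keyed by subtree sums of degrees, draw a vertex of $C_1$ proportional to its degree, then an incident edge uniformly from its $\le\kappa$-entry adjacency list, and on success link the two components in $O(\log n)$; as in the cited footnote we cap the total repair time at $\Theta(\log n)\cdot\gamma_{\textit{prune}}/\phi_{\textit{sparse}}$ and restart on the probability-$n^{-C}$ overflow, which leaves the expectation unchanged. The delicate point --- and the main obstacle --- is precisely this step: simultaneously (i) invoking the expansion of the pruned graph to lower-bound the per-sample success probability, (ii) bounding the number of components by $\tilde{O}(1)$ via the degree bound on $\hat{G}_i$, and (iii) supporting weighted random sampling of an incident edge of a dynamically changing tree component in $O(\log n)$ time, and then checking that the restart safety valve preserves both the expected per-update bound here and the high-probability bound needed for \Cref{thm:mainTech}.
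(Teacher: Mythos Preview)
Your proof is correct and follows essentially the same approach as the paper: bound the rebuild by the expander-decomposition and pruning-initialization costs on a graph of size $\tilde{O}(|E_{H_{i-1}}|)$, and bound each non-rebuild update by the $\tilde{O}(1)$ pruning cost plus the $\tilde{O}(1)$ expected sampling cost coming from the $\phi_{\textit{sparse}}/\gamma_{\textit{prune}}$ conductance of the pruned expander. You supply considerably more detail than the paper does---the explicit $\kappa$ degree bound on $\hat{G}_i$, the handling of the saturated-$\mathcal{P}_j$ branch, and the weighted-sampling data structure for drawing an edge incident to a tree component---but the underlying argument is the same.
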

\begin{proof}
    We first compute the tree $F_i$ in time $\tilde{O}(|E_{H_{i-1}}|)$ by \Cref{clm:forsest_rt}. Then, we compute an expander decomposition of the resulting core graph $\hat{A}_i$. This graph contains at most $\tilde{O}(|E_{H_{i-1}}|)$ edges by \Cref{clm:correctnessKappaShattering}. Therefore, the runtime of computing said expander decomposition follows from $\phi_{\textit{sparse}} = \tilde{\Omega}(1)$ and \Cref{thm:expander_decomp}. Finally, we initialize the pruning data structures on the expander components, which takes total time $\tilde{O}(|E_{H_{i-1}}|)$ by \Cref{thm:worst_case_pruning}. This concludes the runtime analysis of the re-initialization. 

    We analyze the time spend with pruning and recomputing trees separately.
    \begin{itemize}
        \item The pruning data structure takes total time $\tilde{O}(1)$ to prune at most $\kappa$ edges by \Cref{thm:worst_case_pruning} and the definition of our algorithm, where we use that $\kappa = \tilde{O}(1)$.
        \item The sampling procedure succeeds with probability $\gamma_{\textit{prune}}/\phi_{\textit{sparse}} = \tilde{\Omega}(1)$. Therefore, the expected update time follows directly from the fact that the total number of new edges added to the tree is at most $4\kappa^2 \cdot \gamma_{\textit{prune}}/\phi_{\textit{sparse}}^2 = \tilde{O}(1)$.
    \end{itemize}
    All other operations can be implemented in time $\tilde{O}(1)$.
\end{proof}

\subsection{Analyzing the Main Algorithm} \label{sec:analysis_main}

To analyze the correctness of our algorithm, we first prove the main invariant. 

\begin{proof}[Proof of \Cref{inv:keyInv}] 
    We first observe that for $i = 0$, we have $F_0 = (V, \emptyset)$ for every $t$, and therefore the number of connected components in $F_0[C]$ is $|C| = |C|/2^0$ as desired. 
    
    We then strengthen the invariant for $i > 0$ to enable our proof. 
    \begin{itemize}
        \item \underline{Strengthened Invariant:} For every layer $i \geq 0$ and time $t$, we have
        \begin{equation*}
        \text{(\#connected components in $F_i$)} \leq \text{(\#connected components in $G$)} + 2^{\Lambda - i - 2} + (t - t_i).
    \end{equation*}
    where $t_i$ denotes the largest timestamp such that $t_i \leq t$ and $x_i$ is divisible by $t_i$. 
    \end{itemize}
    We prove the strengthened invariant by induction on $i$ and $t$.
    
    For $t = 0$ and $i = 0$ we have $F_0 = (V, \emptyset)$, and thus the number of connected components in $F_0$ is $|V| \leq 2^{\Lambda - 2}$ as desired. Then, we fix some timestamp $t$ and layer $i$, and assume the strengthened invariant holds for all pairs layers $j$ and timesteps $t'$ such that either $j < i$ and $t' \leq t$ or $t' < t$. There are two cases to consider. 
    \begin{itemize}
        \item \underline{$x_i$ is divisible by $t$:} In this case, the layer $i$ gets rebuilt. By the induction hypothesis, we have that $F_{i - 1}$ contains at most $2^{\Lambda - i - 1}$ more components than $G$ because $t - t_{i - 1} \leq 2^{\Lambda - i - 2}$ throughout by the definition of our algorithm. Therefore, the claim follows from \Cref{clm:correctnessKappaShattering} since $\kappa = 8 \cdot 108$. 
        \item \underline{Otherwise:} In this case, we invoke the induction hypothesis for $t - 1$ and $i$.
        
        If the update is an edge deletion, then this can cause the number of connected components in $F_i$ to go up by at most one. Therefore the strengthened invariant follows directly. 

        If the update is an edge insertion, we distinguish two cases. Either the endpoints were already in the same connected component at time $t - 1$. In this case, the strengthened invariant follows by induction since the forest $F_i$ remains unchanged. Or the edge insertions connects two previously disconnected components $C_1$ and $C_2$. Then we have that the total number of connected components of $F_i$ remains the same, but the number of connected components in $G$ decreases by exactly one. Therefore, the invariant again follows. 
    \end{itemize}
    We finally observe that $t - t_i \leq 2^{\Lambda - i - 3}$. Therefore, the strengthened invariant implies the desired invariant. 
\end{proof}

\begin{lemma}\label{lma:update_time}
    The amortized update time is $\tilde{O}(1)$. 
\end{lemma}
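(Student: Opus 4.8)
The plan is to bound the \emph{total} work performed by the data structure over the initialization plus the at most $m$ updates by $\tilde O(m)$ in expectation, and then divide by $m$. Two facts drive the argument: (i) the edge counts $|E_{H_i}|$ decay geometrically in the layer index $i$, and (ii) this decay is matched exactly by the geometrically growing re-initialization frequency $1/x_i$, so that the rebuild cost charged to each layer sums to $\tilde O(m)$ rather than to a polynomial. Everything builds on \Cref{clm:size}, \Cref{clm:forsest_rt} and \Cref{clm:rebuild}; the amortized time we obtain is in expectation, matching the ``expected time'' guarantee of \Cref{clm:rebuild} for non-rebuild steps.

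\textbf{Size bound.} First I would show, by induction on $i$, that $|E_{H_i}| = \tilde O(m/2^i)$ holds at every time step. The base case $i=0$ is immediate since $H_0 = G$ carries at most $2m$ edges throughout. For the step, note that $\Lambda = \lceil\log_2 m\rceil + 4$ forces $x_i = \lceil 2^{\Lambda-i-3}\rceil = \Theta(m/2^i)$. Immediately after a re-initialization of layer $i$, \Cref{clm:size} gives $|E_{H_i}| \le \bigl(\tfrac{108}{\kappa} + 2\gamma_{\textit{exp}}\phi_{\textit{sparse}}\bigr)\,|E_{H_{i-1}}| = \tfrac{9}{40}|E_{H_{i-1}}| \le \tfrac14|E_{H_{i-1}}|$ by the choices $\kappa = 8\cdot 108$ and $\phi_{\textit{sparse}} = 1/(20\gamma_{\textit{exp}})$; and over the window of $x_i$ steps until the next re-initialization, \Cref{clm:size} bounds the number of edges inserted into $H_i$ by $x_i \cdot 4\Lambda\kappa^2\gamma_{\textit{prune}}/\phi_{\textit{sparse}}^2 = \tilde O(x_i) = \tilde O(m/2^i)$. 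Since layer $i-1$ is processed before layer $i$ at every step, the inductive hypothesis $|E_{H_{i-1}}| = \tilde O(m/2^{i-1})$ applies at the moment of re-initialization, and hence $|E_{H_i}| \le \tfrac14\tilde O(m/2^{i-1}) + \tilde O(m/2^i) = \tilde O(m/2^i)$ at all times.

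\textbf{Charging.} By \Cref{clm:forsest_rt} and \Cref{clm:rebuild}, re-initializing layer $i$ — forest $F_i$, the core graph, the expander decomposition and pruning structures, the spanning trees, and $\hat H_i$, $H_i$ — costs $\tilde O(|E_{H_{i-1}}|) = \tilde O(m/2^i)$ with high probability. Layer $i$ is re-initialized only at time steps divisible by $x_i$, hence at most $\lfloor m/x_i\rfloor + 1 = O(2^i)$ times over the sequence (the extra $1$ covers $t=0$). So layer $i$ contributes $O(2^i)\cdot\tilde O(m/2^i) = \tilde O(m)$ of rebuild work, and summing over the $\Lambda+1 = O(\log m)$ layers gives $\tilde O(m)$ total; in particular the single full rebuild at $t=0$ costs $\sum_i \tilde O(m/2^i) = \tilde O(m)$, consistent with the initialization time in \Cref{thm:mainTech}. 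At a step where layer $i$ is \emph{not} re-initialized, processing it costs expected $\tilde O(1)$: one link-cut-tree deletion from $F_i$, implicit maintenance of $\hat G_i$, the pruning step and repair of the affected $\hat T_j$ (expected $\tilde O(1)$ by \Cref{clm:rebuild}), and insertion of the $\tilde O(1)$ edges of $\bigcup_{j\le i}\Delta\hat E_j$ into $\hat H_i$ and $H_i$; over $O(\log m)$ layers and at most $m$ steps this is $\tilde O(m)$ in expectation. Adding the three contributions yields expected total work $\tilde O(m)$, hence amortized expected update time $\tilde O(1)$. The low-probability failure events wired into the algorithm — a faulty expander decomposition from \Cref{thm:expander_decomp}, or an over-long tree reconnection — occur with probability $n^{-C}$ for an arbitrarily large constant $C$ and trigger a full restart at cost $\tilde O(m)$ plus future processing; a union bound over the $\le m$ steps makes their contribution to the expectation negligible.

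\textbf{Main obstacle.} The crux is the size bound: one must verify that the schedule $x_i = \Theta(m/2^i)$ is tuned precisely so that the $\tilde O(x_i) = \tilde O(m/2^i)$ edges $H_i$ accumulates between rebuilds never exceed the ``fresh'' size $\tfrac14|E_{H_{i-1}}|$ produced by a rebuild. This is exactly what lets the geometric decay of $|E_{H_i}|$ survive the insertions, and it is precisely that decay which makes the $\Theta(2^i)$ rebuilds of cost $\tilde O(m/2^i)$ telescope to $\tilde O(m)$ per layer instead of blowing up polynomially; once this balance is in place, the rest is bookkeeping on top of \Cref{clm:size}, \Cref{clm:forsest_rt} and \Cref{clm:rebuild}.
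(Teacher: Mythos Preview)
Your proposal is correct and follows essentially the same approach as the paper's proof: an inductive size bound $|E_{H_i}| = \tilde O(m/2^i)$ (the paper phrases this as the explicit strengthened invariant $|E_{H_i}| \le 4i\,\kappa^2\gamma_{\textit{prune}}/\phi_{\textit{sparse}}^2\,(2^{\Lambda-i}+(t-t_i))$, but the content is the same), followed by the identical charging argument that balances the $O(2^i)$ rebuilds of layer $i$ against their $\tilde O(m/2^i)$ cost. Your explicit computation $108/\kappa + 2\gamma_{\textit{exp}}\phi_{\textit{sparse}} = 9/40 < 1/2$ and your treatment of the low-probability restart events are likewise in line with the paper.
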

\begin{proof}
    We first prove that $|E_{H_{i}}| = \tilde{O}(x_i)$ by induction. For our proof, we again work with a strengthened invariant. 
    \begin{itemize}
        \item \underline{Strengthened Invariant:} $|E_{H_i}| \leq 4 i \cdot \kappa^2 \cdot \gamma_{\textit{prune}}/\phi_{\textit{sparse}}^2 \left( 2^{\Lambda - i} + (t - t_i) \right) $, where $t_i$ denotes the largest timestamp such that $t_i \leq t$ and $x_i$ is divisible by $t_i$. 
    \end{itemize}
    For $i = 0$, the claim follows for every $t \leq m$ since $|E_{H_0}| \leq m \leq 2^\Lambda$ where we recall that $x_0 > m$ and therefore the layer $0$ never gets rebuilt. 
    Then, we again fix some timestamp $t$ and layer $i$, and assume the strengthened invariant holds for all pairs layers $j$ and timestamps $t'$ such that either $j < i$ and $t' \leq t$ or $t' < t$. 
    \begin{itemize}
        \item \underline{$x_i$ is divisible by $t$:} In this case, the layer $i$ gets rebuilt. By the induction hypothesis, we have that $H_{i - 1}$ contains at most $4(i - 1) \cdot \kappa^2 \cdot \gamma_{\textit{prune}}/\phi_{\textit{sparse}}^2 (2^{\Lambda - i - 1} + 2^{\Lambda - i - 2})$ edges because $t - t_{i - 1} \leq 2^{\Lambda - i - 2}$ throughout by the definition of our algorithm. Therefore, the invariant follows from \Cref{clm:size} since $\kappa = 8 \cdot 108$ and $\phi_{\textit{sparse}} = 1/(20 \gamma_{\textit{exp}})$.
        \item \underline{Otherwise:} In this case, we invoke the induction hypothesis for $t - 1$ and $i$. If the update is an edge deletion, then this can cause the number of connected components in $F_i$ to go up by at most one. Then, the invariant directly follows from \Cref{clm:size}. 
    \end{itemize}
    Finally, we observe that the invariant implies $|E_{H_i}| = \tilde{O}(x_i)$ by our choice of parameters and \Cref{thm:worst_case_pruning}. Therefore, the claim directly follows because layer $i$ gets re-initialized after $x_i$ time steps, and the cost of re-initialization of layer $i$ is $\tilde{O}(H_{i - 1}) = \tilde{O}(x_i)$ \Cref{lma:update_time}. All other updates take worst-case time $\tilde{O}(1)$ by \Cref{lma:update_time}, so the total update time follows by summing over the $\Lambda = O(\log m)$ layers. 

    Finally, failures happen with sufficiently low probability to amortize away a full rebuild. 
\end{proof}

\mainThm*
\begin{proof}
    Follows directly from \Cref{inv:keyInv} and \Cref{lma:update_time}. 
\end{proof}

\subsection{De-amortization} \label{sec:deamortization}

As presented, our algorithm only achieves amortized update times. This deficit is remedied via standard de-amortization techniques. 

We observe that the only amortized part in the update time bound is the re-initialization, which happens in regular intervals. This allows pre-computing such re-initializations in the background and swapping them in at the opportune moment to obtain an algorithm with worst-case update times. 

We remark that this technique heavily relies on pointer switching. Concretely, the algorithm cannot output changes to the forest explicitly, and instead provide query access to the forest. However, in \Cref{sec:explicit}, we provide a black-box reduction that shows that such an algorithm suffices to maintain an explicit tree. 

\subsection{Maintaining an Explicit Forest} \label{sec:explicit}

In this section, we provide a simple reduction from explicit to implicit forest maintenance. 

\begin{lemma} \label{lma:explicit_repair_tree}
    Assume to be given a worst-case $\tilde{O}(1)$ update time dynamic connectivity algorithm that provides query access to a maximal spanning forest. Then, there is a worst-case $\tilde{O}(1)$ update time dynamic connectivity algorithm that explicitly maintains a maximum spanning forest via at most $2$ updates per step. 
\end{lemma}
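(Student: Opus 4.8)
The plan is to maintain our own \emph{explicit} maximal spanning forest $F'$, stored in a link-cut tree \cite{driscoll1986making} so that connectivity queries, links, cuts, and navigation of the tree path between two given vertices all run in $\tilde O(1)$ worst-case time. We feed the given black-box algorithm $\mathcal A$ the same stream of updates and otherwise only query it; write $\bar F$ for the maximal spanning forest to which it provides query access. We initialize $F'$ to $\bar F$ (reading it off in $\tilde O(m)$ time) and output it, and from then on we preserve the invariant that $F'$ is a maximal spanning forest of the current graph $G$, producing at most $2$ changes to $F'$ per update to $G$. Connectivity queries are then answered directly from $F'$.

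The update rules are the obvious ones, and the correctness of each is immediate from the invariant. On an insertion of $e=(u,v)$: if $u,v$ lie in different trees of $F'$ we add $e$ to $F'$ and output this insertion (these trees spanned distinct components of $G$, which $e$ now merges); otherwise we do nothing. On a deletion of $e=(u,v)$, which we also forward to $\mathcal A$: if $e\notin F'$ we do nothing, since $F'$ never used $e$ and still spans every component; if $e\in F'$ we first remove $e$ from $F'$ and output this deletion, obtaining two trees $T_1\ni u$ and $T_2\ni v$, and then query $\mathcal A$ whether $u$ and $v$ are still connected. If they are not, the component split exactly into $V(T_1)$ and $V(T_2)$ and $F'$ is already maximal, so the step ends after this single change. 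If they are still connected, we need exactly one \emph{replacement edge} crossing the cut $(V(T_1),V(T_2))$ in $G$; adding any such edge restores a maximal spanning forest and keeps the step at $2$ changes.

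The only nontrivial point is finding a replacement edge in $\tilde O(1)$ worst-case time, and here we exploit $\bar F$. Since $u,v$ remain connected, they lie in one tree $\bar T$ of $\bar F$; as edge deletions change no component's vertex set, $V(\bar T)$ equals the vertex set of the (unchanged) component of $u$ and $v$, which is exactly $V(T_1)\sqcup V(T_2)$, since before the deletion this component was spanned by $T_1$, $T_2$ and the edge $e$. Hence the $\bar T$-path $P$ from $u\in V(T_1)$ to $v\in V(T_2)$ contains an edge with one endpoint in $V(T_1)$ and one in $V(T_2)$; any such edge lies in $G$, joins $T_1$ to $T_2$, and closes no cycle, so it is a legal replacement. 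We locate one by binary search on $P$: in each round we take the midpoint $p$ of the current sub-path by navigating $\bar F$, and with one connectivity query to our own forest $F'$ we decide whether $p\in V(T_1)$ or $p\in V(T_2)$; in the first case a crossing edge lies on the sub-path from $p$ to $v$, in the second on the one from $u$ to $p$, so in either case we recurse on a sub-path of half the length that still runs from a $V(T_1)$-vertex to a $V(T_2)$-vertex. After $O(\log n)$ rounds the sub-path is a single edge, which necessarily crosses the cut; we add it to $F'$ and output the insertion. Thus each update triggers $O(1)$ link-cut operations on $F'$ and $O(\log n)$ queries to $\mathcal A$, all in $\tilde O(1)$ time, and at most $2$ forest changes, proving the lemma. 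The step to be careful about is precisely this search: that it is correct even though the $V(T_1)/V(T_2)$ labels along $P$ need not be monotone (the invariant ``a sub-path from a $T_1$-vertex to a $T_2$-vertex contains a crossing edge'' survives each halving), and that the only feature of $\mathcal A$ we use beyond connectivity queries is $\tilde O(1)$-time navigation of tree paths in $\bar F$, which is available from the link-cut-tree representation used throughout this paper.
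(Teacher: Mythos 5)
Your proof is correct and follows essentially the same approach as the paper's: maintain an explicit forest in a link-cut tree, and on a replacement search, binary-search along the implicit (black-box) forest's tree path between $u$ and $v$, using connectivity queries in the explicit forest to determine which half still straddles the cut. Your writeup is somewhat more careful about why the halving invariant survives even though the cut labels along the path need not be monotone, but the algorithm and analysis are the ones the paper gives.
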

\begin{proof}
    The algorithm initially computes an explicit maximal spanning forest $F$. In the following, we let $F'$ denote the implicitly maintained forest which we query. We assume that both are stored in link-cut tree data structures \cite{driscoll1986making}. 
    
    Whenever an edge is inserted, the algorithm first checks if it crosses tree components in $F$. If so, it adds it to $F$. 

    Whenever an edge $(u,v)$ is deleted from $F$, the algorithm checks if $u$ and $v$ are in the same component in $F'$. If so, it needs to repair the tree $F$. To do so, it sets $x \gets u$, $y \gets v$. Then, it uses the link-cut tree to obtain the midpoint $z$ of the path $F'[x, y]$. It then checks if $z$ is in the same components as $u$ in $F$. If so, we set $x \gets z$. Otherwise, we set $y \gets z$. We continue until the edge $(x,y)$ is in $F'$. Then, we add this edge to $F$.

    This procedure terminates in $\tilde{O}(1)$ steps since the length of the path is halved in every step. Furthermore, $x$ is always in the same component of $F$ as $u$, and $y$ is always in the same component as $v$. Therefore, the edge $(x,y)$ repairs the tree in $F$ as desired. 

    Other updates do not affect the connected components of the graph.
\end{proof}

Similarly to \Cref{lma:explicit_repair_tree}, we can also obtain an algorithm that maintains a $(1 + \epsilon)$ approximate minimum spanning tree for a graph with edge weights in $[1, U]$.

\begin{corollary}\label{corr:apx_mst}
    Assume to be given a worst-case $\tilde{O}(1)$ update time dynamic connectivity algorithm that provides query access to a maximal spanning forest. Then, there is a worst-case $\tilde{O}(\log(U)/\epsilon)$ update time algorithm that explicitly maintains a $(1 + \epsilon)$ minimum spanning forest of a weighted graph $G$ with edge weights in $[1, U]$ via at most $2$ updates per step.  
\end{corollary}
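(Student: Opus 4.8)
The plan is to reduce the weighted $(1+\epsilon)$-approximate minimum spanning forest problem to $O(\log(U)/\epsilon)$ parallel instances of the unweighted dynamic connectivity problem, one per geometrically-spaced weight threshold, and then glue their outputs together using the same path-halving repair idea as in \Cref{lma:explicit_repair_tree}. Concretely, set $\beta = 1+\epsilon$ and let $L = \lceil \log_\beta U \rceil = O(\log(U)/\epsilon)$. For each $j \in [L]_0$, let $G_j$ be the (unweighted) subgraph of $G$ consisting of all edges $e$ with $w(e) \le \beta^j$, and maintain a dynamic connectivity data structure $\mathcal{D}_j$ on $G_j$ providing query access to a maximal spanning forest $F_j'$ of $G_j$. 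A single edge update to $G$ (insertion, deletion, or weight change) translates into at most one update to each $G_j$, so the total update time is $O(L) \cdot \tilde{O}(1) = \tilde{O}(\log(U)/\epsilon)$.

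Next I would maintain the explicit approximate MSF $F$ as follows. Round each edge weight $w(e)$ up to the nearest power of $\beta$, call it $\bar w(e) = \beta^{\lceil \log_\beta w(e)\rceil}$; this costs at most a $(1+\epsilon)$ factor, so it suffices to maintain an exact minimum spanning forest with respect to $\bar w$. Process thresholds in increasing order of $j$: greedily, $F$ should consist, for each $j$, of a maximal set of edges of rounded weight exactly $\beta^j$ that connect components which are still separate after all smaller thresholds have been processed — this is exactly Kruskal's algorithm and yields an exact MSF for $\bar w$. To maintain this explicitly under updates, I keep $F$ in a link-cut tree and, whenever an edge of rounded weight $\beta^j$ is deleted from $F$ (or an edge is inserted that could improve the forest), I repair the cut by querying $F_j'$: if the two endpoints $u,v$ of the removed tree edge lie in the same component of $F_j'$, there is a replacement edge of rounded weight $\le \beta^j$, and I locate one by the same midpoint/path-halving trick as in \Cref{lma:explicit_repair_tree}, walking down $F_j'[u,v]$ and maintaining that the current endpoints stay in the correct $F$-components until an edge already present in $F_j'$ is found; I then add that edge to $F$. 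If no replacement exists at level $j$, the two sides stay disconnected at this scale and no repair edge is needed (a heavier edge, if any, will already be in $F$ from a larger threshold, or gets added when that larger threshold is reprocessed). Insertions of a light edge that shortcuts a heavy tree edge are handled symmetrically: detect via $F_j'$ that a cheaper connection now exists, swap. Each such repair touches $\tilde{O}(1)$ nodes of a link-cut tree and changes $F$ by at most a constant number of edges; by bookkeeping the levels so that at most one threshold's forest structure changes per update to $G$, $F$ undergoes at most $2$ edge updates per step.

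The main obstacle I anticipate is the bookkeeping that guarantees $F$ changes by at most $2$ edges per update while remaining \emph{exactly} a minimum spanning forest for the rounded weights — in particular, correctly handling the cascade where deleting a light tree edge forces a replacement that is itself heavier, potentially rippling across several thresholds. The clean way around this is to argue, as in Kruskal, that the set of "useful" edges at each level is determined solely by the connectivity partition induced by strictly lighter edges, so a single edge update to $G$ perturbs that partition at only one level and thus necessitates repair work at only one level $j$; the repair there is a single replacement found by path-halving in $F_j'$, exactly mirroring \Cref{lma:explicit_repair_tree}, and therefore costs $\tilde{O}(1)$ time and at most $2$ changes to $F$. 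Everything else — the $(1+\epsilon)$ loss from rounding, the $O(\log(U)/\epsilon)$ blow-up from running $L$ connectivity structures, and the link-cut tree operations — is routine.
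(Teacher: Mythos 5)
Your overall strategy --- geometric bucketing of weights into $O(\log(U)/\epsilon)$ levels, one dynamic connectivity structure per level graph $G_j$, maintaining a minimum-level spanning tree (equivalently, an exact MSF for weights rounded up to powers of $1+\epsilon$), and locating replacement edges by the path-halving technique of \Cref{lma:explicit_repair_tree} --- is the same as the paper's. However, there is a concrete gap in your case analysis for deletions. You handle the case where the two endpoints $u,v$ of a deleted level-$j$ tree edge remain connected in $G_j$, but when they do not, you claim ``no repair edge is needed (a heavier edge, if any, will already be in $F$ from a larger threshold, or gets added when that larger threshold is reprocessed).'' This is incorrect: the replacement edge at some level $j' > j$ is a non-tree edge and is \emph{not} already in $F$ (it would have closed a cycle), and there is no later reprocessing step that would add it. The last paragraph's claim that a single update ``perturbs that partition at only one level'' is likewise false --- deleting a level-$j$ edge removes it from every $G_{j'}$ with $j' \geq j$ --- though the intended consequence (only one replacement edge is ultimately added) is right. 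What is missing is the step the paper makes explicit: when $u,v$ are no longer connected in $G_j$, search for the \emph{lowest} level $j' > j$ at which $u,v$ are still connected in $F'_{j'}$, and run the path-halving repair there. With that search (a simple scan or binary search over levels, well within the $\tilde{O}(\log(U)/\epsilon)$ budget), the argument closes and the at-most-two-changes bound holds: one deletion from $F$ plus one replacement at the found level (or none if $u,v$ become genuinely disconnected in $G$).
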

\begin{proof}
    For $i = 0, \ldots, k = O(\log(U/\epsilon))$, let $G_i$ be the subgraph of $G$ that only contains edges with weight at most $(1 + \epsilon)^i$.

    We then maintain an explicit spanning forest for each of the graphs $G_i$. We observe that the connected components of $G_i$ contain the connected components for $G_j$ whenever $j < i$. We say an edge has level $i$ if $G_i$ is the first graph in which it appears. 

    We then maintain a minimum level spanning tree, i.e. a spanning tree of minimum total level. Whenever an edge $(u,v)$ of level $i$ in that tree gets deleted, we first check if the corresponding spanning forest got repaired. Otherwise, we search for the lowest level $j > i$ such that $u$ and $v$ are in the same forest and locate a repairing edge via binary search as in the proof of \Cref{lma:explicit_repair_tree}. 

    When a new edge is inserted, we search for the highest level edge in the cycle formed by the maintained approximate minimum spanning tree and the edge via the same technique. If this level is higher than the level of the newly inserted edge, then we replace it in the tree. 

    This procedure clearly maintains a minimum level spanning tree and can be implemented to run in time $\tilde{O}(k)$. 

    The claim follows since for every edge $e$ in the tree, there is no edge of with weight $< w(e)/(1 + \epsilon)$ that connects the tree components after deleting the edge $e$. 
\end{proof}

\subsection{A path towards De-randomization} \label{sec:derandomziation}

The connectivity sparsifier is the only randomized part of our algorithm. There, randomness is used to quickly compute an expander decomposition, and to maintain a spanning tree of an expander graph. 

The broadest path towards de-randomizing this algorithm is to provide a connectivity sparsifier $H \subseteq G$ of $G$ with an explicit embedding $\Pi_{G \mapsto H}$ from $G$ to $H$ (see \Cref{def:graphEmbedding}). Such an embedding maps each edge $e = (u, v)$ that is present in $G$ to a $uv$-path $\Pi_{G \mapsto H}(e)$ in $H$, and ensures that every edge in $H$ appears on at most $\alpha \cdot |E_G|/|E_H|$ such paths for some parameter $\alpha$. The ability to \emph{statically} compute such a sparsifier suffices, because it can be maintained by simply adding all edges whose embedding paths are broken to the sparsifier. More precisely, given an edge $e$ from $G$ is deleted, then we add to $H$ all edges $e'$ where $e \in \Pi_{G \mapsto H}(e')$. 

The algorithm only adds at most $\alpha \cdot |E_G|/|E_H|$ edges to $H$ by definition of $\alpha$. Since we ensure for each graph $\hat{G}_i$ in the hierarchy that $|E(\hat{G}_i)| = \tilde{O}(|V(\hat{G}_i)|)$, i.e. the number of edges exceeds the number of vertices by at most a polylogarithmic factor, we only add for each level $\tilde{O}(\alpha)$ edges to the hierarchy across all levels.

Further, $H$ preserves the connectivity information of $G$ at all times: for every edge $e'$ in $G$, either there still is an explicit path between its endpoints in $H$, or the edge $e'$ is itself in $H$. Thus, it is simple to show by induction on the levels that the connectivity algorithm remains correct.

Via the same analysis as in previous sections, we derive that the new algorithm runs in worst-case update time $\tilde{O}(\alpha + T(m))$ where $T(m) \cdot m$ is the runtime of the algorithm that statically computes sparsifier and embedding. For $\alpha = O(T(m))$, this yields \Cref{thm:reduction}.

A sparsifier as described above along with a low-congest embedding can be obtained rather directly via edge-decremental APSP \cite{kyng2023dynamic, kyng2024bootstrapping} using ideas similar to \cite{maxflow, chen2023simple}, but the total time incurred by these data structures is currently $m^{1+o(1)}$ instead of $\tilde{O}(m)$ for any $m$ edge graph. This yields an immediate de-randomization of our algorithm with subpolynomial worst-case update time.

Another plausible way toward deterministic algorithms is to de-randomize the components of our algorithm more directly. A fundamental stepping stone would be the development of a deterministic near-linear time algorithm to compute expander decompositions. The current state-of-the-art algorithm takes time $m^{1+o(1)}$. In developing such an algorithm, one can also hope to derive techniques that allow us to route more efficiently in expander graphs. Most concretely, interpreting each edge in $G$ as a commodity that has to be embedded into $H$, computing an embedding boils down to solving a $k$-commodity routing problem for $k = \tilde{\Theta}(n)$. Recent techniques \cite{haeupler2024low} can solve such problems approximately in time $\tilde{O}( (m+k)^{1+\epsilon})$ for arbitrarily small $\epsilon > 0$. 

\section*{Acknowledgments}

We thank Yu-Cheng Yeh and the anonymous reviewers for comments that improved the presentation of this article.

\newpage
\bibliographystyle{alpha}
\bibliography{refs}

\end{document}